\newcommand{\cmark}{\ding{51}}%
\newcommand{\xmark}{\ding{55}}%
\newtheorem{claim}{\textbf{Claim}}
\newtheorem{theorem}{\textbf{Theorem}}
\newtheorem{definition}{\textbf{Definition}}
\newtheorem{proof}{\textbf{Proof}}
\newenvironment{proc}[1][htb]
  {
   \begin{algorithm}%
  }{\end{algorithm}}
\newenvironment{module}[1][htb]
  {
   \begin{algorithm}%
  }{\end{algorithm}}
\begin{document}
%
\title{CryptoMaze: Privacy-Preserving Splitting of Off-Chain Payments }
%
%
%
%

\author{Subhra~Mazumdar,~and~Sushmita~Ruj,~\IEEEmembership{Senior~Member,~IEEE} \\
\IEEEcompsocitemizethanks{\IEEEcompsocthanksitem Subhra Mazumdar is with the Cryptology and Security Research Unit, Indian Statistical Institute, Kolkata, India  \protect\\
E-mail: subhra.mazumdar1993@gmail.com
\IEEEcompsocthanksitem Dr. Sushmita Ruj is with University of New South Wales, Sydney, Australia  \protect\\
E-mail: Sushmita.ruj@unsw.edu.au

}
}

%
%

\markboth{}%
{Shell \MakeLowercase{\textit{et al.}}: Bare Demo of IEEEtran.cls for Computer Society Journals}
%



\IEEEtitleabstractindextext{%
\begin{abstract}

Payment Channel Networks or PCNs solve the problem of scalability in Blockchain by executing payments off-chain. Due to a lack of sufficient capacity in the network, high-valued payments are split and routed via multiple paths. Existing multi-path payment protocols either fail to achieve atomicity or are susceptible to wormhole attack. We propose a secure and privacy-preserving atomic multi-path payment protocol CryptoMaze. Our protocol avoids the formation of multiple off-chain contracts on edges shared by the paths routing partial payments. It also guarantees unlinkability between partial payments. We provide a formal definition of the protocol in the Universal Composability framework and analyze the security. We implement CryptoMaze on several instances of Lightning Network and simulated networks. Our protocol requires 11s for routing a payment of 0.04 BTC on a network instance comprising 25600 nodes. The communication cost is less than 1MB in the worst-case. On comparing the performance of CryptoMaze with several state-of-the-art payment protocols, we observed that our protocol outperforms the rest in terms of computational cost and has a feasible communication overhead.

\end{abstract}

\begin{IEEEkeywords}
Blockchain; Layer 2 protocols; Payment Channels; Payment Channel Networks; Wormhole Attack; Atomic Multi-path Payment; Privacy; Unlinkability; Bitcoin; Lightning Network.
\end{IEEEkeywords}}

\maketitle

\IEEEdisplaynontitleabstractindextext

%
\IEEEpeerreviewmaketitle

\IEEEraisesectionheading{\section{Introduction}\label{sec:introduction}}

Cryptocurrencies are gaining prominence as an alternative method of payment. Blockchain forms the backbone of such currencies, guaranteeing security and privacy. It allows transacting parties to remain pseudonymous and ensures the immutability of records. Records in blockchain are publicly verifiable. Bitcoin mining relies on Proof-of-Work (PoW) \cite{nakamoto2008bitcoin, o2014bitcoin, bano2019sok} to ensure a Sybil-resistant network. Unfortunately, PoW is quite resource-intensive and time-consuming, reducing transaction throughput \cite{croman2016scaling, poon2016bitcoin}. 

Layer-two protocols provide a solution to the problem of scalability. It enables users to perform transactions \emph{off-chain} and massively cut down data processing on the blockchain. Solutions like payment channels, channel factories, payment channel hub, side-chains, and commit-chains have been stated in the literature survey \cite{gudgeon2020sok}. \textit{Payment Channels} \cite{decker2015fast}, \cite{poon2016bitcoin} are widely deployed in many applications. It is modular and does not require any fundamental changes in the protocol layer. Two parties can mutually agree to open a payment channel by locking their funds for a certain period. Nodes not directly connected by a payment channel route a payment via an existing set of channels. This set of interconnected payment channels forms a \emph{Payment Channel Network} or PCN. \emph{Lightning Network} for Bitcoin \cite{poon2016bitcoin} and \emph{Raiden Network} for Ethereum \cite{raiden} are the two most popular networks. Designing privacy-preserving routing and payment protocols for such networks is a big challenge. Most of the routing algorithms focus on finding a single path for routing a transaction. However, finding a single route for a high-valued transaction is a challenging task. After several payments get executed in the network, channels in a path may not have sufficient balance to relay the funds. In such circumstances, it is better to split high-valued payments across multiple paths to increase the success rate of transactions. However, it is not trivial to design a protocol for multi-path payment and we discuss the challenges faced.

%

%
%
%

\subsubsection*{Challenges faced in multi-path payments}
\begin{itemize}[leftmargin=*]
\item \emph{Atomicity of payments}: Several distributed routing algorithms \cite{elias,prihodko2016flare,silentwhispers,speedymurmur,viswanath2012canal,yu2018coinexpress,hoenisch2018aodv,wang2019flash,mazumdar2020hushrelay,lin2020rapido} have been proposed for relaying transactions across multiple paths. A payment transferred from payer to payee must be \emph{atomic}. Either all the partial payments succeed or fail in their entirety. Applying existing payment protocols like Hashed Timelock Contract \cite{poon2016bitcoin}, \cite{moreno2015privacy}, \texttt{BOLT} \cite{green2017bolt}, Sprites \cite{miller2017sprites}, \cite{malavolta}, Anonymous Multi-Hop Lock or \texttt{AMHL} \cite{malavoltamulti}, on individual paths routing partial payment might not guarantee atomicity. If an instance of the protocol fails in one of the paths, only the partial amount gets transferred to the receiver, violating atomicity.
 
\item \textit{Susceptible to wormhole attack}: Existing multi-path payment protocols like  \emph{AMP}\cite{multipath}, \emph{Boomerang} \cite{bagaria2019boomerang} achieve atomicity. Each path forwarding the partial payment uses the same commitment, making it susceptible to \emph{wormhole attack} \cite{malavoltamulti}. Malicious parties in a given path may collude and steal an honest party's processing fee.

\item \textit{Multiple off-chain contracts on shared channels}: Multiple paths routing a single payment may not be edge-disjoint. In~\autoref{multiple}, $M$ wants to transfer $5.1 \ units$ to $N$. The payment is split across two paths $p_1=\langle MA\rightarrow AB \rightarrow BD \rightarrow DN\rangle$ and $p_2=\langle MA\rightarrow AC \rightarrow CD \rightarrow DN\rangle$ into 2.6 units and 2.5 units respectively. Each intermediate parties charge a processing fee of 0.1 units. Channels $MA$ and $DN$ are shared by the two paths. Thus, two off-chain contracts need to be established for routing each partial payment. Also, nodes $A$ and $D$ get paid twice for forwarding each partial payment, levying an additional cost overhead on the sender $M$. To save cost and avoid the overhead of instantiating off-chain contracts, it is better to construct one off-chain contract on shared payment channels for a payment instance.

 \begin{figure}[!ht]
    \centering
    \includegraphics[scale=0.35]{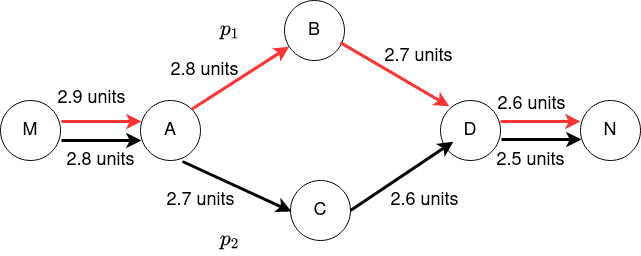}
    \caption{Paths $p_1$ and $p_2$ shares channels $MA$ and $DN$}
    \label{multiple}
\end{figure}

\item \textit{Linkability between partial payments}: A given node will be willing to route full payment instead of partial payments \cite{eckey2020splitting}. The success rate is low when payment is split. If a partial payment fails in one of the paths, then the entire payment rolls back. If colluding parties can link partial payments, they will tend to reject such requests and preserve their channel capacity for routing the full amount. Unlinkability must be ensured to prevent censoring split payments.

\end{itemize}
Our goal is to construct a payment protocol that addresses all the shortcomings discussed above. 

\vspace*{-0.1cm}

\subsection{Contributions}

\begin{itemize}[leftmargin=*]

\item We propose CryptoMaze, an efficient, privacy-preserving, atomic multi-path payment protocol. Our protocol optimizes the setup cost by avoiding the formation of multiple off-chain contracts on a channel shared by partial payments. To date, no other protocol has been able to achieve this optimization.

\item Our protocol ensures balance security, i.e., honest intermediaries do not lose coins while forwarding the payment. 
\item Our protocol description ensures unlinkability between partial payments.

\item We have modeled CryptoMaze and defined its security and privacy notions in the Universal Composability or UC framework.

\item Experimental Analysis on several instances of Lightning Network and simulated networks show that our proposed payment is as fast as Atomic Multi-path Payment \cite{multipath}. The run time is around 11s for routing a payment of 0.04 BTC in a network instance of 25600 nodes. The communication overhead is within feasible bounds, being less than 1MB. The code is available in \cite{Code}.
\end{itemize}

\subsection{Organization}
The rest of the paper has been organized as follows: Section \ref{3} provides the background concept needed for understanding our paper. Section \ref{rel} discusses the related works in multi-path payments. Our proposed protocol has been described in detail in Section \ref{5}. We discuss the security of our protocol in Universal Composability (UC) framework in Section \ref{attacksec} and provide the security analysis. The experimental observation has been provided in Section \ref{exp}. Finally, we conclude our paper in Section \ref{7}.


\section{Background }
\label{3}
 \begin{table}
\centering
  

  \begin{tabular}{|c |c |} 
    \hline
 Notation &Description \\
 \hline
 
$\mathcal{G}:=(V,E)$ &Bidirected Graph representing PCN\\
$V$   &Nodes in $\mathcal{G}$\\
$E$   & Payment channels in $\mathcal{G}$, $E \subset V \times V$\\
$C: E \times \mathbb{N} \rightarrow \mathbb{R}^{+}$ & Capacity function\\
 $f: V \rightarrow\mathbb{R}^{+}$ &Function defining processing fee\\
 $id_{i,j}$ & Identifier of payment channel $(U_i,U_j)\in E$\\
 $\mathbb{B}$ &Blockchain \\

 $\mathbb{G}$ &Elliptic curve of order $q$\\
 & where $q=p^n, p$ is a prime number\\
 $G$ &Base point of elliptic curve $\mathbb{G}$\\
 $\lambda$ &Security Parameter\\
 $\mathcal{H}:\{0,1\}^*\rightarrow \{0,1\}^\lambda$ & Standard Cryptographic Hash function\\
 $\Delta$ &Worst-case confirmation time for a\\
 &  transaction to get recorded in Blockchain\\
 $U_0$ & Payer, a node in set $V$ \\
  $U_r$ & Payee, a node in set $V$\\
  $gain  : V \rightarrow \mathbb{R}^{+}$ & Function defining coins gained by a node\\
  $\mathbb{PC}$ &Set of payment channels, created by $U_0$\\

  $T_i$ &Timestamp at which  node $U_i$ receives\\
  &  its first incoming contract request. \\
    $\delta$ &latency\\
  $\mathcal{F}$ & Ideal functionality for payment in PCN\\
    $\mathcal{F}_{\mathbb{B}}$ & Ideal functionality for Blockchain $\mathbb{B}$\\
        $\mathcal{F}_{smt}$ & Ideal functionality for secure message\\
        & transmission\\
  \emph{Sim} &Ideal world adversary\\
  $\mathcal{A}$ &Adversary in the real world\\
  $\mathcal{Z}$ &Environment\\
  
\hline

\end{tabular}

\caption{Notations used in the paper}
\label{tab:not}
\vspace{-0.2cm}
\end{table}

In this section, we provide the required background for understanding our protocol. The terms source/payer means the sender node. Similarly, sink/payee/destination means the receiver node. A payment channel has been referred to as an edge. \autoref{tab:not} states the notations used in the paper.


\subsection{Payment channels}
\label{pc}

A payment channel enables several payments between two users without committing every single transaction to the blockchain.  Any two users can mutually agree to open a payment channel by locking their coins into a multi-signature address controlled by both users. These parties can perform several off-chain payments by locally agreeing on the new deposit balance. Correctness of payments is enforced cryptographically by the use of hash locks, time locks \cite{poon2016bitcoin}, or scriptless locking \cite{malavoltamulti}. A party can close the payment channel, with or without the cooperation of counterparties, broadcasting the latest transaction on the Blockchain. Broadcasting of older transactions leads to the slashing of deposits made by the malicious party. 

\subsection{Payment channel networks (PCNs)}
\label{basic}
A Payment Channel Network is modeled as a bidirected graph $\mathcal{G}=(V,E)$ where $V$ represents the participants in the network and $E \subseteq V \times V$ denotes the payment channels existing between parties \cite{poon2016bitcoin}, \cite{decker2015fast}. Opening a payment channel $(U_i,U_j)$ is equivalent to the opening of two unidirectional payment channels $(U_i,U_j)$ and $(U_j,U_i)$. The channel identifier for $(U_i,U_j)$ is denoted as $id_{i,j}$. The underlying blockchain, denoted as $\mathbb{B}$, acts like a trusted append-only ledger recording the opening and closing of payment channels. 
A capacity function, defined as $C: E \times \mathbb{ N}\rightarrow \mathbb{R}^+$, denotes the balance of each party in the channel at a given time. For example, $C((U_i,U_j),t)$ denotes the balance of party $U_i$ in the channel $id_{i,j}$ at time $t$.  We define the fee charged by each node as $f: V \rightarrow \mathbb{R}^{+}$. The fee calculated is proportional to the coins a particular node is routing through its channel. If a party $U_i$ receives a request to transfer $val$ coins at time $t_{current}$ to a node $U_j$, it checks locally whether there exist payment channels connected to $U_i$ and $C((U_i,U_j),t_{current}) \geq val$.
\subsection{Off-chain contracts}

Off-Chain contracts are smart contracts where the logic encoded in the contract is not run by the miners. It is mutually executed by the participants involved in instantiating the contract. The advantage of having off-chain contracts are that computation-intensive tasks can be executed without involving blockchain as long as participants behave honestly. An individual player can prove the correct contract state independently. Cheating is prevented as the state of the contract is signed by all the players. If a party misbehaves by broadcasting a wrong state in blockchain, the counterparty can raise a dispute and publish the valid accepted state. Hashed Timelock Contract or HTLC \cite{poon2016bitcoin} is one such example used in PCN for routing payments in the network. The logic used is a hash function, where players need to provide the preimage of the hash to claim coins.
\subsection{Wormhole attack in PCNs}
HTLC uses the same commitment across the path routing the payment. Consider an example where $U_0$ wants to transfer $\alpha$ coins to $U_r$ via nodes $U_1,U_2,\ldots,U_{n-1}, n>3$. The coins transferred by $U_0$ is $\alpha+\sum\limits_{i=1}^{n-1}f(U_i)$. Node $U_3$ colludes with $U_{n-2}$ before the protocol starts. In the release phase, $U_r$ decommits and claims the coins. However, node $U_{n-2}$ directly shares the decommitment with $U_3$. The former cancels the HTLC with node $U_{n-3}$ and the cancellation of HTLC continues till node $U_3$. So nodes from $U_4$ to $U_{n-3}$ considers the payment to have failed. $U_3$ and $U_{n-2}$ steals the fee of all these intermediate nodes, gains $\sum\limits_{i=4}^{n-3}f(U_i)$. This is termed wormhole attack \cite{malavoltamulti}.

\section{Related works}
\label{rel}
\begin{table}[ht]
    \centering
    \begin{tabular}{|c|c|c|c|c|c|c|}
    \hline
          &AMP \cite{multipath} &\cite{piatkivskyi2018split}   &\cite{bagaria2019boomerang} & NAPS \cite{dziembowski2020non} &\cite{eckey2020splitting}  &Crypto-\\
          & & & & & &Maze\\
    \hline
	At  &\cmark  &\xmark &\cmark &\xmark &\cmark &\cmark\\
			WA &\cmark &\cmark &\cmark &\xmark &\xmark &\xmark\\
			Li &\xmark &\xmark &\xmark &\cmark &\cmark &\xmark\\
          M-OC &\cmark &\cmark &\cmark &\cmark &\cmark &\xmark\\
    \hline
    \end{tabular}
    \caption{Comparative Analysis of CryptoMaze with existing Multi-path payment protocols in terms of atomicity (At), wormhole attack (WA), Linkability (Li) and multiple off-chain contracts on shared edges (M-OC)}
    \label{tab:my_label}
\end{table}

Several single path payment protocols like Hashed Time-Lock Contract or \emph{HTLC} \cite{poon2016bitcoin}, Multi-Hop HTLC \cite{malavolta}, Anonymous Multi-Hop Lock or \emph{AMHL} \cite{malavoltamulti} have been proposed that works for single-path payment. However, a direct extension of such protocols into multi-path payment may fail to guarantee atomicity. Sprites \cite{miller2017sprites} was proposed for Ethereum-styled PCN guarantees atomicity of payments and locks constant collateral. In \cite{eggeratomic}, \cite{aumayr2021blitz}, a similar construction has been proposed Bitcoin-compatible PCN. However, such protocols work for single-path and lack any discussion on multi-path settings.

Multi-path payment was first discussed in SilentWhisper \cite{silentwhispers}, but at the cost of substantial computation overhead. The protocol was not atomic. Osuntokun \cite{multipath} was the first to propose a protocol that guarantees the atomicity of split payments. It uses linear secret sharing of the commitments shared across the multiple paths routing partial payments. But this protocol is susceptible to \emph{wormhole attack} and high latency. In \cite{piatkivskyi2018split}, a protocol for \emph{splitting payments interdimensionally} was proposed where the total amount to be transferred is split into unit-amounts and routed through the same or different routes. However, the authors state that their protocol does not stress achieving atomicity. Partial satisfaction of payment is considered a favorable outcome. The problem of latency and throughput in \emph{AMP} is addressed by another payment protocol, Boomerang \cite{bagaria2019boomerang}. However, the protocol suffers from the problem of \emph{wormhole attack} and requires locking of excess collateral. In \cite{lin2020rapido}, a payment protocol termed D-HTLC was proposed for multiple paths. However, the protocol relies on the atomicity of payments using a penalization mechanism. Levying penalty is not a good method since honest nodes might lose coins without any fault. A protocol \emph{Non-Atomic Payment Splitting (NAPS)} that recursively splits payment is discussed in \cite{dziembowski2020non}. However, the protocol does not aim for atomicity and partial payment is treated as a valid outcome. Eckey et al. \cite{eckey2020splitting} had proposed an atomic payment protocol that allows intermediaries to split payments dynamically by adapting to the local condition. The protocol is atomic, privacy-preserving, and not susceptible to wormhole attack. However, each node forwarding payment uses homomorphic encryption to encrypt the payment information. Such an operation is quite computation-intensive. The public key of the receiver is forwarded to all the nodes routing partial payments. Though the authors claim that partial payments remain unlinkable, colluding parties can link payments by observing the common public key.

We provide a comparative analysis of our protocol with the state-of-the-art multi-path payment in \autoref{tab:my_label}. Our protocol is atomic, wormhole attack resistant, and guarantees unlinkability between partial payments. None of the shared edges require multiple off-chain contracts for a single payment instance. A new protocol, xLumi \cite{ying2021xlumi} was proposed for blockchain systems. This protocol creates unidirectional channels. Unlike Lightning Network, xLumi drastically reduces the number of interactions and complexity of opening a payment channel. Users are not required to store a new secret for every off-chain transaction. However, xLumi has not been expanded to bidirectional channels and payment channel networks. It would be interesting to see how CryptoMaze can be adapted in xLumi based PCN.

\section{Proposed construction}
\label{5}
 $U_0$ wants to transfer $val$ coins to $U_r$ efficiently via the PCN $\mathcal{G}=(V,E)$, where $U_0 \in V, U_r \in V$. None of the nodes in the network must learn the identity of the payer, payee, or the coins transferred. Any honest party must not lose coins while routing the payment. We discuss the cryptographic preliminaries, system requirements, security, and privacy goals, followed by a formal description of the proposed protocol for realizing the payment.

\subsection{Cryptographic preliminaries}
(i) \emph{Discrete Logarithm Problem}: Given the elliptic curve $\mathbb{G}$ of order $q$ with base point $G$, $q=p^n$ where $p$ is a prime number and $n\in \mathbb{N}$, the discrete  logarithm  problem is  defined as follows: \emph{Given points $P,Q \in \mathbb{G}$, find an integer $a$ such that $Q=aP$, if $a$ exists. This computational problem is the \emph{Elliptic Curve Discrete Logarithm Problem} or ECDLP  that forms the fundamental building block for elliptic curve cryptography \cite{galbraith2016recent}}.\\
(ii) \emph{Standard Cryptographic Hash Function}: A cryptographic hash function is a one-way function that, given any fixed length input generates a unique fixed-length output. It is represented as $\mathcal{H}:\{0,1\}^*\rightarrow \{0,1\}^\lambda$, where $\lambda$ is the security parameter used in the model.

\subsection{System model}
\label{sys}
Given the PCN $\mathcal{G}=(V,E)$, a function $gain: V \rightarrow R^{+}$ is defined to quantify the coins any node has gained or lost while running an instance of the protocol. If we assume that the protocol starts at time $t_0$ and ends at time $t'$ then, for a node $v,\  gain(v)= \sum\limits_{u \in V, (v,u) \in E} C((v,u),t')-C((v,u),t_0)$. The global ideal functionality for blockchain $\mathcal{F}_{\mathbb{B}}$ \cite{malavolta} maintains $\mathbb{B}$. An arbitrary condition can be specified in the contract in order to execute a transaction in $\mathbb{B}$. $\mathcal{F}_{\mathbb{B}}$ is entrusted to enforce fulfillment of the contract before the corresponding transaction is executed. $t_{end}$ is the least timeout period set for an off-chain contract. $\Delta$ is the worst-case time taken for a transaction to settle on-chain. Each node $U_i \in V$ has its pair of the private key and public key. Pairs of honest users sharing a payment channel communicate using ideal functionality for secure message transmission $\mathcal{F}_{smt}$ \cite{can}. $U_i$ send $(sid,\textrm{instruction},U_i,U_j,m)$, containing the secret message $m$, to $U_j$ via $\mathcal{F}_{smt}$. $(sid,\textrm{instruction},U_i,U_j,|m|)$ is leaked to an adversary, where $|m|$ is the message length.

\textit{System Assumption.} Any user can get information on the network topology by sending a \emph{read} instruction to $\mathcal{F}_{\mathbb{B}}$. The latter sends the whole transcript of $\mathbb{B}$ in reply. The current value on each payment channel is not published but instead kept locally by the users sharing a payment channel. Every user is aware of the payment fees charged by other users in the PCN. All the nodes know each other's public keys. We do not discuss other problems occurring in the network like individual channel congestion, blocking of nodes, etc. These issues are orthogonal to the problem addressed in this paper. Problems arising due to concurrent payments can be addressed with the solutions proposed in \cite{malavolta}.

\textit{Communication Model.} We consider the bounded synchronous communication model \cite{attiya2004distributed}. In this model, time corresponds to the number of entries of $\mathbb{B}$, denoted by $|\mathbb{B}|$. Time is divided into fixed communication rounds. It is assumed that all messages sent by a user in a round are available to the intended recipient within a bounded number of steps in execution. The absence of a message in a round indicates an absence of communication from a user.


\subsection{Security and privacy goals}
\label{pg}
 We identify the following security and privacy notions:
\begin{itemize}[leftmargin=*]

\item \emph{Correctness}: Given all the nodes routing the payment are honest, $gain(U_0)= -(val+\sum\limits_{U_i \in V \setminus \{U_0,U_r\}: U_i \in \mathbb{PC}}f(U_i))$, $gain(U_r)=val$ and $gain(U_i)=f(U_i), \forall U_i \in V \setminus \{U_0,U_r\}$.
\item \emph{Consistency}: No intermediate node $U_i \in V \setminus \{U_0,U_r\}$ can provide the decommitment for the preceding off-chain contracts before the release of the decommitment in at least one of the succeeding off-chain contracts. If this holds, then no wormhole attack is possible as intermediate nodes cannot be bypassed.
\item \emph{Balance Security}: Honest intermediary does not lose coins, i.e., for any honest $U_i \in V\setminus \{U_0,U_r\}$, $gain(U_i)\geq 0$.
\item \emph{Value Privacy}: Corrupted users outside the
payment path must not have any information regarding the payment value in a pay operation involving only honest users.
\item \emph{Unlinkability}: Given a node $U_i$ splits the payments $val$ into $k$ parts $val_1,val_2,\ldots,val_k$ among the $k$ neighbors $U_{i,1},U_{i,2},\ldots,U_{i,k} : (U_i,U_{i,j}) \in E, j \in [1,k]$. If all the neighbors collude, they cannot figure out whether they are part of the same payment or a different payment.
\item \emph{Relationship Anonymity}: Given two simultaneous successful pay operations of the
form $(U_0,U_r,val)$ and $(U_0',U_r',val)$, using the same set of intermediate nodes and payment channels for routing payment, with at least one honest intermediate user $U_i$, corrupted intermediate users cannot determine whether the payment is from $U_0$ to $U_r$ or from $U_0'$ to $U_r'$ with a probability greater than $\frac{1}{2}$.
\item \emph{Atomicity}: If all the nodes preceding $U_r$ have forwarded their partial payments, then only the receiver can start claiming payments. Even if one of the nodes fails to forward the payment, then $gain(U_r)=0$ and $gain(U_i)=0, \forall U_i \in V\setminus \{U_0\}$.

\end{itemize}

\subsection{Mapping a set of paths into a set of edges}
\label{key}
In the example shown in \autoref{exanow}, $M$ wants to transfer an amount 5.1 units to $N$. Each intermediate node charges 0.1 unit as a processing fee. Initially, the set of routes must be realized by $M$. Any known routing algorithm like \cite{speedymurmur,yu2018coinexpress,lin2020rapido,wang2019flash} or \cite{mazumdar2020hushrelay} can be used. The paths returned are $p_1=\langle id_{M,A}\rightarrow id_{A,B}  \rightarrow id_{B,D} \rightarrow id_{D,N}\rangle$ and $p_2=\langle id_{M,A}  \rightarrow id_{A,C} \rightarrow id_{C,D} \rightarrow id_{D,N} \rangle$. Given that there are \emph{four} intermediate nodes, $M$ forwards 5.5 units to $A$, the latter will deduct 0.1 units, split the amount, and forwards 2.7 units each to channels $id_{A,B}$ and $id_{A,C}$. Node $B$ and $C$ charge 0.1 units each and forwards 2.6 units to channels $id_{B,D}$ and $id_{C,D}$ respectively. $D$ deducts 0.1 unit and forwards 5.1 units to $N$. In the paths $p_1$ and $p_2$, the channels $id_{M,A}$ and $id_{D,N}$ are shared. Instead of considering each path individually, a union of all the edges present in $p_1$ and $p_2$ is taken and set $\mathbb{PC}$ is constructed. The channels are inserted into the set in breadth-first order, starting from $M$. The set $\mathbb{PC}=\{ id_{M,A},id_{A,B},id_{B,D},id_{A,C},id_{C,D}, id_{D,N}\}$ is used as the protocol's input. Thus, mapping a set of paths into a set of edges allows a shared edge to appear not more than once in $\mathbb{PC}$.

\begin{figure}[!ht]
    \centering
    \includegraphics[scale=0.35]{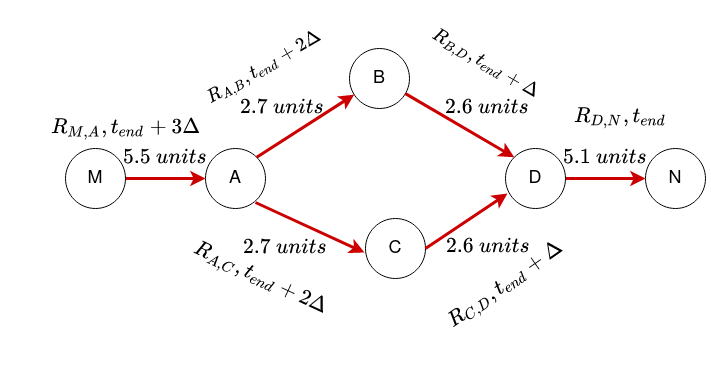}
    \caption{CryptoMaze executed on the network for routing payment from $M$ to $N$}
    \label{exanow}
\end{figure}

\subsection{Formal definition of the protocol}
\label{gen}
The protocol involves three phases: \emph{Preprocessing Phase}, \textit{Contract Forwarding Phase} and \textit{Release Phase}. $U_0$ forms the set $\mathbb{PC}$ and uses it as an input for \emph{Preprocessing Phase}. We define each phase in detail.

\subsubsection{Preprocessing phase}
\label{preprop}
$U_0$ extracts out the set of edges from $\mathbb{PC}$. We divide the phase into sub-phases, explained as follows:

(i) \emph{Secret value for claiming Payment.} The payee $U_r$ samples a random number $x_{\acute{r}}$ and sends $X_{\acute{r}}=x_{\acute{r}}G$ to $U_0$ via a secure communication channel. $U_0$ checks the number of incoming channels sending partial payments to $U_r$. If there are $k$ such channels, $U_0$ samples $y_i \in \mathbb{Z}_q$. The latter constructs the condition for each off-chain contract in reverse order, starting from node $U_r$. For any channel $id_{b,r} \in \mathbb{PC}, U_b \in V$, $R_{b,r}=e_{b,r} \sum\limits_{i=1}^k y_i G+X_{\acute{r}}$. $R_{b,r}$ is the condition encoded in the off-chain contract formed on the channel $id_{b,r}$. $e_{b,r}$ is blinding factor for hiding the secret value $y=\sum\limits_{i=1}^k y_i$. It is defined as $e_{b,r}=\mathcal{H}(\sum\limits_{i=1}^k y_i||id_{b,r})$. $U_r$ needs to provide the discrete logarithm of $R_{b,r}$ for claiming coins from $U_b$.

(ii) \emph{Conditions for off-chain contracts.} If any intermediate node $U_i$ is forwarding payment to a single node, $U_0$ samples independent strings $x_{i} \in \mathbb{Z}_q$ for the node. If a node $U_i$ forwards payments to multiple neighbors, then it must be ensured that $U_i$ does not lose coins when one of the neighbors fail to release the decommitment of an off-chain contract. To avoid this problem, our protocol uses a \emph{1-out-of-$m$} policy where even if one of the outgoing neighbors of $U_i$ responds, the latter can claim the coins. We first explain the procedure for computing secrets for a node that splits the payment value and forwards it to multiple neighbors with an example. 

In \autoref{exanow}, the condition used in the contracts established on each of the channels are denoted as follows: $R_{M,A}$ for $id_{M,A}$, $R_{A,B}$ for $id_{A,B}$, $R_{A,C}$ for $id_{A,C}$, $R_{B,D}$ for $id_{B,D}$, $R_{C,D}$ for $id_{C,D}$, and $R_{D,N}$ for $id_{D,N}$. Node $A$ splits the payment and sends it to nodes $B$ and $C$. The condition $R_{M,A}$ must be constructed so that the secrets provided by either $B$ or $C$ helps $A$ in claiming the amount from $M$. If $A$ establishes the same contract $R$ with nodes $B$ and $C$, then $R_{M,A}=R+e_{M,A} x_A G$. If $B$ and $C$ collude, they can link their payments. The situation is shown in \autoref{split}. To avoid the problem, two different conditions $R_{A,B}$ and $R_{A,C}$ are assigned to off-chain contracts on channels $id_{A,B}$ and $id_{A,C}$.  $A$ adjusts the value by adding $x_{A,B}G$ to $R_{A,B}$ and $x_{A,C} G$ to $R_{A,C}$ to ensure equality. Thus, we have $R_{M,A}=R_{A,B}+e_{M,A} x_A G+x_{A,B}G$ where $R_{A,B}+x_{A,B}G=R_{A,C}+x_{A,C}G$.

\begin{figure}[!ht]
    \centering
    \includegraphics[scale=0.33]{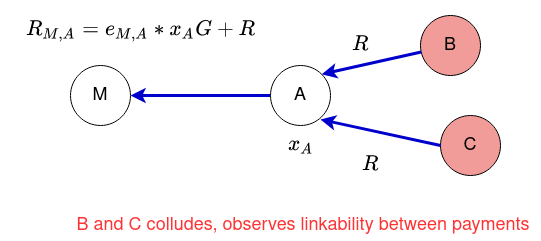}
    \caption{Problem of linkability between partial payments}
    \label{split}
\end{figure}


Let $Z=R_{A,B}+x_{A,B}G=R_{A,C}+x_{A,C}G$. If we fix the discrete logarithm of $Z$ to $x: Z=xG=R_{A,B}+x_{A,B}G=R_{A,C}+x_{A,C}G$, we can calculate the values $x_{A,B}$ and $x_{A,C}$. Again, $x_A=x_{A,B}+x_{A,C}$. Even if the off-chain contracts $R_{A,B},R_{A,C}$ and $R_{M,A}$ are settled on-chain, still a miner cannot establish linkability between the three. If $x_{A,B}$ and $x_{A,C}$ is not known to the miner, then it can establish a relationship between discrete logarithm of $R_{M,A}$ and discrete logarithm of $R_{A,B}$ or $R_{A,C}$ with negligible probability.

Summarizing the procedure, for a pair of channels $id_{i,j}$ and $id_{j,k}$, having conditions $R_{i,j}$ and $R_{j,k}$ where $U_j\neq U_r$, $e_{i,j}=\mathcal{H}(x_{j}||id_{i,j})$:

(a) If $U_j$ forwards payment to only one neighbor $U_k$, the condition $R_{i,j}$ is defined as follows:
\begin{equation}
\begin{matrix}
\label{qt1}
 R_{i,j}=e_{i,j} x_{j} G+R_{j,k} \\
\end{matrix}
\end{equation}  

(b) If $U_j$ splits the payment and forwards it to one of the neighbour $U_k$, $R_{i,j}$ is defined as follows:
\begin{equation}
\begin{matrix}
\label{qt2}
 R_{i,j}=e_{i,j} x_{j} G+R_{j,k}+x_{j,k}G \\
\end{matrix}
\end{equation}  
where $U_k \in \textrm{neighbor}(U_j), id_{j,k} \in \mathbb{PC}$.

To compute $x_{j,k}$ for $R_{j,k}$, $U_0$ generates a random value $\hat{x}$ such that $\hat{x}G+X_{\acute{r}}=R_{j,k}+x_{j,k}G, \forall U_k \in V, id_{j,k} \in \mathbb{PC}$. Fixing discrete logarithm as $\hat{x}$ helps $U_0$ to calculate $x_{j,k}$ for each channel $id_{j,k}$ corresponding to node $U_k$. The expression can be rewritten as follows:
\begin{equation}
\begin{matrix}
\label{q2}
 \qquad x_{j,k} G=X_{\acute{r}}+\hat{x} G-R_{j,k}
\end{matrix}
\end{equation}  

The discrete logarithm of $X_{\acute{r}}+\hat{x} G-R_{j,k}$ is known to $U_0$, i.e. $x_{j,k}=\hat{x}-\textrm{dlog}(R_{j,k}-{X_{\acute{r}}})$, where $\textrm{dlog}$ is the discrete logarithm. Once each $x_{j,k}$ gets computed, $U_0$ computes $x_j=\sum\limits_{  U_k \in V, id_{j,k} \in \mathbb{PC}} x_{j,k}$. Thus, for any node $U_i$ forwarding payments to multiple neighbors, the discrete logarithm for $R_{i,j}$ can be supplied by any of the outgoing neighbors of $U_j$. 
Substituting the value of $x_{j,k}G$ obtained from \eqref{q2} in \eqref{qt2}, we have:
\begin{equation}
 R_{i,j}=e_{i,j} x_{j} G+X_{\acute{r}}+\hat{x} G
\end{equation}


(iii) \textit{Setting timeout period.} The least timeout period assigned to the all incoming contract of $U_r$ is denoted as $t_{end}$. Starting from this point, the timeout period of all the preceding contracts get decided. For time-locked contracts established with any channel $id_{i,j}, U_j \neq U_r$, assign $t_{i,j}=\max\limits_{\forall U_k \in V, id_{j,k} \in \mathbb{PC}}\{t_{j,k} \}+\Delta$ as the timeout period of the contract on payment channel $id_{i,j}$. 

\subsubsection{Contract Forwarding Phase}
Each node $U_i$ uses shared variable $flag_i$ and $T_i$, both initialized to 0. The variable $flag_i$ is set to 1 if the node $U_i$ has received all the incoming contracts. $T_i$ is set to the current time when $U_i$ receives its first incoming contract request. $U_i$ waits for time $T_i+\delta$ to receive all the incoming contract requests, where $\delta>0$ is the latency. If the time elapsed is greater than $T_i+\delta$ but $flag_i$ is still 0, then $U_i$ sends abort to its preceding contracts, canceling the payment. 

Starting from node $U_0$, any node $U_i\neq U_r$ sends the request $(R_{i,j},val_{i,j},t_{i,j})$ for forming contracts to all its neighbor via $\mathcal{F}_{smt}$, once $flag_i$ is set to 1. For ease of analysis, we explain the procedure for one of its neighbors, say $U_j$. If the latter accepts the request, it gets the encrypted message $Z_{i,j}$. Upon decryption, it gets $M_j=\{(val_{j,k},x_{j,k},R_{j,k},t_{j,k},Z_{j,k}): \forall k \in V, id_{j,k} \in \mathbb{PC}\}$, where $Z_{j,k}$ is the encrypted message to be forwarded to the node $U_k$. $U_j$ checks the consistency of incoming contracts with the terms stated for an outgoing contract by calling the subroutine \textbf{TimeLockContractForward}, described in Module \ref{algo:t1}. The checks mentioned in this subroutine ensure the integrity of the phase. If the subroutine returns failure, then $U_j$ cancels all the off-chain contracts formed with preceding nodes. Else, $U_j$ waits for all preceding contracts such that the total value from the incoming contract is the summation of the fee charged by $U_j$ and the coins it needs to lock in all the outgoing contracts specified in $M_j$. After $U_j$ receives all the contracts within time $T_j+\delta$, then it begins forwarding the payment to its neighbor. The steps are defined in Procedure \ref{algo:payintmd}. The execution time is determined by the degree of the node and thus the time complexity of the procedure is $O(\frac{|E|}{|V|})$ where $|E|$ is the number of edges and $|V|$ is the number of vertices in $\mathcal{G}$. 

A node can identify its predecessor if the former obtains similar messages upon decryption. The node can forward one such message and discard the rest. This phase continues till all partial payments reach $U_r$. Even if there is one off-chain contract that did not get instantiated in a payment channel belonging to $\mathbb{PC}$, $U_r$ cannot compute the secret $y$ for claiming coins. Satisfying this constraint implies that all the partial payments have been combined properly, guaranteeing \emph{atomicity}. Once the receiver has received all the partial payments within a bounded amount of time, it triggers the \emph{Release Phase}.

\subsubsection{Release Phase}
$U_r$ gets the secret share from all the incoming off-chain contracts forwarding the payment. The former can compute the secret $y$ as described in Procedure \ref{algo:releasereceiver}. Upon computing the secret, $U_r$ calls the subroutine \textbf{TimeLockContractRelease} defined in Module \ref{algo:t2}. The module returns the solution for the condition encoded in the incoming contracts forwarded by its neighbor. If the solution is correct, $U_r$ sends a decision of acceptance to its predecessor along with the secret. Else, it sends an abort message to the neighbors and the payment fails. The abort process is mentioned in Procedure \ref{algo:abort}. Any intermediate node, involved in forwarding conditional payment can claim the coins if at least one of the neighbors responds. The steps followed by an intermediate node for claiming payment have been defined formally in Procedure \ref{algo:releaseintmd}. Time complexities of Procedure \ref{algo:releasereceiver} and Procedure \ref{algo:releaseintmd} are $O(\frac{|E|}{|V|})$ each.

\begin{module}[!ht]
    \SetKwInOut{Input}{Input}
    \SetKwInOut{Output}{Output}

    \caption{TimeLockContractForward for node $U_j \in V$}
        \label{algo:t1}

\textbf{Input } : $(D_j,t_{i,j},R_{i,j})$ \\
Parse $D_j=\{(id_{j,k},x_{j,k},R_{j,k},t_{j,k}): \forall k \in V, id_{j,k} \in E\}$\\
Compute $x_j=\sum\limits_{ k \in V, id_{j,k} \in E} x_{j,k}$\\
Compute $e_{i,j}=\mathcal{H}(x_{j}||id_{i,j})$\\
\If{$|D_j|>1$}
{
\For{$k \in V: id_{j,k} \in E$}
{

\If{$R_{i,j}\stackrel{?}{=}e_{i,j} x_{j}G+R_{j,k}+x_{j,k}G$ and $t_{i,j}\stackrel{?}{\geq} t_{j,k}+\Delta$}
{
    \emph{continue}\\
}
\Else
{
\Return failure
}
}
}
\Else
{
\If{$R_{i,j}\neq e_{i,j} x_{j}G+R_{j,k}$ or
 $t_{i,j}< t_{j,k}+\Delta$}
 {
 \Return failure
 }

}
   \Return success

\end{module}

\begin{module}[!ht]
    \SetKwInOut{Input}{Input}
    \SetKwInOut{Output}{Output}

    \caption{TimeLockContractRelease for node $U_j \in V$}
        \label{algo:t2}

\textbf{Input } : $(r_{j,k},x_j,id_{i,j})$ \\

Compute $e_{i,j}=\mathcal{H}(x_{j}||id_{i,j})$\\
Compute $r_{i,j}=e_{i,j} x_{j}+r_{j,k}$\\
\Return $r_{i,j}$\\
            
\end{module}

\section{An example of CryptoMaze}
We present a complete flow of our protocol with an example. $M$ wants to transfer an amount of 5.1 units to $N$, shown in Fig. \ref{exanow}. Each intermediate node charges 0.1 units as a processing fee. It is assumed that each node has a public key and a private key generated at the time of joining the network. First, we mention how the set of paths returned by any standard routing algorithm must be mapped into a set of edges. The set $\mathbb{PC}$ is constructed using the information. Next, we provide a detailed construction of CryptoMaze where $\mathbb{PC}$ is used as an input.

\subsection{Mapping a set of paths into a set of edges}
The paths returned are $p_1=\langle id_{M,A}\rightarrow id_{A,B}  \rightarrow id_{B,D} \rightarrow id_{D,N}\rangle$ and $p_2=\langle id_{M,A}  \rightarrow id_{A,C} \rightarrow id_{C,D} \rightarrow id_{D,N} \rangle$. Given that there are 4 intermediate nodes, $M$ forwards 5.5 units to $A$, the latter will deduct 0.1 units, split the amount and forwards 2.7 units each to channels $id_{A,B}$ and $id_{A,C}$. They will charge 0.1 unit and forward it to channels $id_{B,D}$ and $id_{C,D}$. $D$ deducts 0.1 unit and forwards the rest to $N$. We see that in paths $p_1$ and $p_2$, the channels $id_{M,A}$ and $id_{D,N}$ are shared. Instead of considering each path individually, a union of all the edges present in $p_1$ and $p_2$ is taken and set $\mathbb{PC}$ is constructed. The channels are inserted into the set in a breadth-first order, starting from $M$. The set $\mathbb{PC}=\{ id_{M,A},id_{A,B},id_{B,D},id_{A,C},id_{C,D}, id_{D,N}\}$ serves as an input for the protocol execution.

\begin{figure}[!ht]
    \centering
    \includegraphics[scale=0.35]{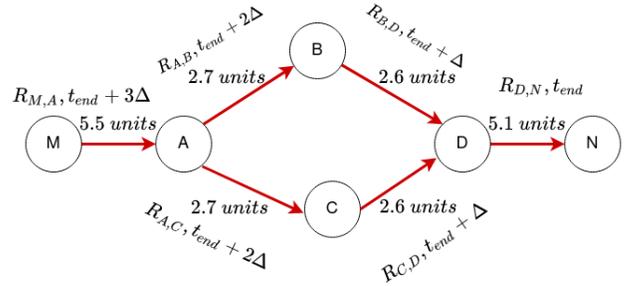}
    \caption{An instance of the protocol is executed}
    \label{exanow}
\end{figure}

\subsection{Phases of the protocol}
$M$ extracts the edges from the set $\mathbb{PC}$. We define each phase of the algorithm in layman terms: 
\begin{itemize}[leftmargin=*]

\item \textbf{Preprocessing Phase}: $M$ samples random values $x_{B},x_{C},x_{D}$ for nodes $B,C$ and $D$. Since, $A$ splits the payment and forwards it to two channels $(A,B)$ and $(A,C)$.  $x_{A,B}$ is assigned to channel $id_{A,B}$ and $x_{A,C}$ is assigned to channel $id_{A,C}$. From these values, we get $x_A=x_{A,B}+x_{A,C}$. $N$ samples a secret $x_N$ but this is not shared with $M$.\\

(i) \textit{Secret Value for claiming payment.} If receiver $N$ receives $k$ such partial payments, then $M$ samples $y_1,y_2,\ldots,y_k$. The secret value $y=\sum\limits_{i=1}^k y_i$ will be used for claiming payments, which will be discussed later. The motivation behind this operation is to prevent the receiver from claiming payments until and unless it has accepted the off-chain contracts corresponding to all partial payments. Since $N$ has one in-degree, a single secret $y$ is sampled by $M$.\\

(ii) \textit{Condition for off-chain contract.} For a given channel $id_{i,j}$,a blinding factor $e_{i,j}$ is constructed for hiding the secret $x_i$. The condition used in the off-chain contract of channel $id_{M,A}$ is $R_{M,A}$, for channel $id_{A,B}$ is $R_{A,B}$, for channel $id_{A,C}$ is $R_{A,C}$, for channel $id_{B,D}$ is $R_{B,D}$, for channel $id_{C,D}$ is $R_{C,D}$, and for channel $id_{D,N}$ is $R_{D,N}$. This is represented in Fig. \ref{exanow}. The computation of conditions needed for each off-chain contracts formed in the channels in $\mathbb{PC}$, is discussed below: 
\begin{equation}
\label{contract}
\begin{matrix}
R_{D,N}=e_{D,N}yG+x_{N}G, \ e_{D,N}=\mathcal{H}(y||id_{D,N})\\
R_{B,D}=R_{D,N}+e_{B,D} x_D G,\   e_{B,D}=\mathcal{H}(x_{D}||id_{B,D})\\
R_{C,D}=R_{D,N}+e_{C,D} x_D G, \   e_{C,D}=\mathcal{H}(x_D||id_{C,D})\\
R_{A,B}=R_{B,D}+e_{A,B} x_B G,  \   e_{A,B}=\mathcal{H}(x_B|id_{A,B})\\
R_{A,C}=R_{C,D}+e_{A,C} x_C G, \   e_{A,C}=\mathcal{H}(x_C||id_{A,C})\\
\end{matrix}
\end{equation}
Till this point, all the nodes were forwarding payment to a single neighbor. However, a node might not be able to route the entire payment value through one single channel. In that case, it is better to split the payment across several outgoing payment channels. Node $A$ has to split the payment across channels $(A,B)$ and $(A,C)$. It is quite possible that one of the neighbors fails to resolve the contract and doesn't release the secret. In that case, $A$ is at a loss if the protocol requires both $B$ and $C$ to respond to resolve the condition $R_{M,A}$. \emph{Balance security} gets violated. To avoid this problem, our protocol uses a \emph{1-out-of-m} policy where even if one of the outgoing neighbors of $A$ respond, the latter can claim payment. The income of $A$ is either equal to or greater than the expenditure. We briefly discuss the underlying concept of computing the contract $R_{M,A}$ so that the secrets provided by either node $B$ or $C$ helps $A$ in claiming money from $M$. Let us discuss a naive approach. If $A$ forms the same contract $R$ with nodes $B$ and $C$, then:

\begin{equation}
\label{init}
\begin{matrix}
R_{M,A}=R+e_{M,A} x_A G\\
\end{matrix}
\end{equation}
\begin{figure}[!ht]
    \centering
    \includegraphics[scale=0.35]{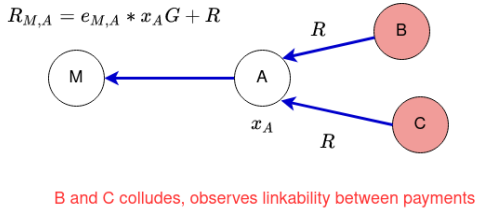}
    \caption{Contract forwarded by the node where the split occurs}
    \label{split}
\end{figure}

If $B$ and $C$ collude, then can figure out that they are part of the same payment, violating the property of \emph{unlinkability}. The problem is shown in Fig. \ref{split}. Hence, the contracts assigned to channels $id_{A,B}$ and $id_{A,C}$ must be different. Thus, we have different conditions $R_{A,B}$ for channel $id_{A,B}$ and $R_{A,C}$ for channel $id_{A,C}$ respectively. $A$ adds $x_{A,B} G$ to $R_{A,B}$ and $x_{A,C}G$ to $R_{A,C}$ so that the following condition holds:
\begin{equation}
\label{d1}
R_{A,B}+x_{A,B}G=R_{A,C}+x_{A,C}G\\
\end{equation}
From \autoref{d1}, we can write $R_{M,A}=R_{A,B}+e_{M,A} x_A G + x_{A,B}G\\=R_{A,C}+e_{M,A} x_{A} G+ x_{A,C} G$, where $e_{M,A}=\mathcal{H}(x_A||id_{M,A})$.

If $M$ fixes the discrete logarithm of $R_{A,B}+x_{A,B}G-x_N G$ to an $x$:
\begin{equation}
 xG=R_{A,B}-x_N G+x_{A,B} G=R_{A,C}-x_N G+x_{A,C} G
\end{equation}
The expression can be rewritten as:
 \begin{equation}
 \begin{matrix}
x_{A,B}G=xG+x_N G-R_{A,B}\\
x_{A,C}G=xG+x_N G-R_{A,C}\\
\end{matrix}
\end{equation}

From \autoref{contract}, we have,
\begin{equation}
\label{wq2}
\begin{matrix}
R_{A,B}=R_{B,D}+e_{A,B} x_B G\\
=R_{D,N}+e_{B,D} x_D G+e_{A,B} x_B G\\
=e_{D,N} y G+ x_{N}G+e_{B,D}x_D G+e_{A,B} x_B G\\
\end{matrix}
\end{equation}
\begin{equation}
\label{wq1}
\begin{matrix}
R_{A,C}=R_{C,D}+e_{A,C} x_C G\\
=R_{D,N}+e_{C,D} x_D G+e_{A,C} x_C G\\
=e_{D,N} y G+ x_{N} G+e_{C,D} x_DG + e_{A,C} x_C G\\
\end{matrix}
\end{equation}
From \autoref{wq2} and \autoref{wq1}, we get $R_{A,B}-x_N G=e_{D,N} y G+ e_{B,D}x_D G+e_{A,B} x_B G$ and $R_{A,C}-x_N G=e_{D,N} y G+ e_{C,D} x_DG + e_{A,C} x_C G$. 
Thus, we have 
\begin{equation}
\label{9}
\begin{matrix}
dlog(R_{A,B}-x_N G)= e_{D,N} y + e_{B,D}x_D +e_{A,B} x_B \\
dlog(R_{A,C}-x_N G)= e_{D,N} y + e_{C,D}x_D +e_{A,C} x_C \\

\end{matrix}
\end{equation}

where dlog is the discrete logarithm. Since $M$ knows $x_1=dlog (R_{A,C}-x_N G)$ and $x_2= dlog(R_{A,B}-x_N G)$, the value of $x_{A,B}$ and $x_{A,C}$ can be calculated.
 \begin{equation}
 \begin{matrix}
  x_{A,B}=x-x_1\\
  x_{A,C}=x-x_2
   \end{matrix}
 \end{equation}

$M$ can compute $x_{A}=x_{A,B}+x_{A,C}$. The condition $R_{M,A}$ for channel $id_{M,A}$ can be computed in the following way:
\begin{equation}
\begin{matrix}
R_{M,A}=R_{A,B}+x_{A,B}G+e_{M,A} x_AG\\
=R_{A,C}+x_{A,C}G+e_{M,A}x_AG\\
\end{matrix}
\end{equation}
where $e_{M,A}=\mathcal{H}(x_A||id_{M,A})$.

(iii) \textit{Setting Timeout Period.} The timeout period of each contract doesn't follow a linear relation, since the amount split and merges at certain points. Since the worst-case time taken for a transaction to settle on-chain is $\Delta$, the difference in timeout period between incoming and outgoing off-chain contracts must be at least $\Delta$. Since the contracts accepted by the receiver must have the least timeout, the assignment is done in reverse order. Hence, $R_{D,N}$ is assigned a timeout of $t_{end}$. Next timeout period for $R_{B,D}$ and $R_{C,D}$ is $t_{end}+\Delta$, for $R_{A,B}$ and $R_{A,C}$ is $t_{end}+2\Delta$. The timeout period for $R_{M,A}=\Delta+max(\textrm{timeout}(R_{A,B})+\textrm{timeout}(R_{A,C}))=t_{end}+3\Delta$.

\item \textbf{Contract Forwarding Phase}:
$M$ uses onion routing for forwarding the off-chain contract, with each message encrypted by the public key of the intermediate nodes. 
We describe each step as follows:
\begin{itemize}[leftmargin=*]
\item $M$ sends a request to form contract $R_{M,A}$ with timeout period $t_{M,A}$ to $A$. The amount  forwarded is $val_{M,A}$=5.5 units. If $A$ accepts the contract, then it forwards the encrypted date to $A$.
\item $A$ decrypts the message and finds secret values $x_{A,B}$ and $x_{A,C}$ for the channels $id_{A,B}$ and $id_{A,C}$. It computes $x_A=x_{A,B}+x_{A,C}$. Along with this, it finds instruction to forward the contracts $R_{A,B}$ and $R_{A,C}$ to $B$ and $C$. The amount to be forwarded $val_{A,B}$, $val_{A,C}$, and timelock of each contract $t_{A,B},t_{A,C}$, is mentioned as well. $A$ computes the blinding factor $e_{M,A}=\mathcal{H}(x_A||id_{M,A})$, checks the conditions $R_{M,A}\stackrel{?}=R_{A,B}+x_{A,B}G+e_{M,A}x_A G$ and $R_{M,A}\stackrel{?}=R_{A,C}+x_{A,C}G+e_{M,A}x_A G$, as stated in Eq.\ref{contract}. It also checks whether $t_{M,A}\stackrel{?}{\geq} \max{(t_{A,B},t_{A,C})}+\Delta$ and $val_{M,A}=f(A)+val_{A,B}+val_{A,C}$, $f(A)$=0.1 units is processing the fee charged by $A$, $val_{A,B}$=2.7 units is the conditional payment forwarded to $B$ and $val_{A,B}$=2.7 units is the conditional payment forwarded to $C$.
\item If the checks mentioned in the previous step holds, $A$ sends contract formation request to $B$ and $C$. If both of them agree to form the contract, then contracts $R_{A,B}$ is formed on channel $id_{A,B}$ and $R_{A,C}$ is formed on channel $id_{A,C}$. $A$ forwards the encrypted data to each of the nodes. 
\item $B$ decrypts and gets $R_{B,D}$, secret value $x_{B}, t_{B,D}, val_{B,D}$ and $C$ gets $R_{C,D}$, secret value $x_{C}, t_{C,D}, val_{C,D}$. Each of them computes blinding factors $e_{A,B}$ and $e_{A,C}$. $B$ checks the condition $R_{A,B}\stackrel{?}{=}R_{B,D}+e_{A,B}x_B G$, $t_{A,B}\stackrel{?}{\geq} t_{B,D}+\Delta$ and $val_{A,B}=f(B)+val_{B,D}$, $val_{B,D}$=2.6 units is the conditional payment forwarded to $D$. $C$ checks the condition $R_{A,C}\stackrel{?}{=}R_{C,D}+e_{A,C}x_C G$, $t_{A,C}\stackrel{?}{\geq} t_{C,D}+\Delta$ and $val_{A,C}=f(C)+val_{C,D}$, $val_{C,D}$=2.6 units is the conditional payment forwarded to $D$.

\item Both $B$ and $C$ find that the next destination is $D$. They send request to form the contract $R_{B,D}$ and $R_{C,D}$. If $D$ accepts the contract, then it receives encrypted messages from both parties.
\item $D$ decrypts both the messages and finds message $R_{D,N}$, secret value $x_{D}, t_{D,N}$ and $val_{D,N}$. It computes blinding factor $e_{B,D}$ and $e_{C,D}$, checks the condition for $R_{B,D}$ and $R_{C,D}$ as stated in Eq. \ref{contract}. Next, $D$ checks $val_{C,D}+val_{B,D}=f(D)+val_{D,N}$, $val_{D,N}$=5.1 units, and the consistency of timeout period $t_{B,D}\stackrel{?}{\geq} t_{D,N}+\Delta,t_{C,D}\stackrel{?}{\geq} t_{D,N}+\Delta$.
\item Since $D$ gets the same message from $B$ and $C$, it discards one and forwards the contract formation request to $N$. Once $N$ accepts the off-chain contract $R_{D,N}$ with timeout period $t_{D,N}$, $D$ forwards the encrypted packet. The receiver decrypts the packet to find the message $y$ and $t_{end}$. If $N$ had $k$ such incoming off-chain contracts, then each would have forwarded the value $y_1,y_2,\ldots,y_k$. In that case, $N$ adds all these partial secrets to get the value $y$.
\end{itemize}

\item \textbf{Release Phase}: The receiver, upon accepting the off-chain contracts for all partial payments, gets the secret value $y$ forwarded by sender. It checks $t_{D,N}=t_{end}$ and $val_{D,N}$=5.1 units. Next, the payer computes $e_{D,N}=\mathcal{H}(y||id_{D,N})$, $r_{D,N}=e_{D,N}y+x_N$ and sends it to $D$. The latter uses $r_{D,N}$ to compute $r_{B,D}=e_{B,D}x_{D}+x_{D}$ and $r_{C,D}=e_{C,D}x_{D}+x_{D}$ to claim 2.6 units each from $B$ and $C$, where $e_{B,D}=\mathcal{H}(x_D||id_{B,D})$ and $e_{C,D}=\mathcal{H}(x_D||id_{C,D})$. $B$ uses $r_{B,D}$ to compute $r_{A,B}=e_{A,B}x_{B}+r_{B,D}$, where $e_{A,B}=\mathcal{H}(x_B||id_{B,D})$. It claims 2.7 units from $A$ by releasing $r_{A,B}$. $C$ uses $r_{C,D}$ to compute $r{A,C}=e_{A,C}x_{C}+r_{C,D}$, where $e_{A,C}=\mathcal{H}(x_C||id_{C,D})$. However, if $C$ decides not to respond, then $A$ can still claim the payment by using the secret released by $B$. It computes $r_{M,A}=r_{A,B}+x_{A,B}+e_{M,A}x_A$ and claims 5.5 units from $M$.


\end{itemize}

\section{Security definition of CryptoMaze}
\label{attacksec}

For modeling security and privacy definition of payment across several payment channels under concurrent execution of an instance of \textit{CryptoMaze}, we take the help of Universal Composability framework, first proposed by Canetti et al. \cite{can}. Notations used here are similar to \cite{malavolta}. 

\subsection{Attacker model \& assumptions}
The real-world execution of the protocol is attacked by an adversary $\mathcal{A}$, a \texttt{PPT}, or \emph{probabilistic polynomial-time} algorithm. We assume that only static corruption is allowed, i.e., the adversary must specify the nodes it wants to corrupt before the start of the protocol \cite{fairswap}, \cite{malavoltamulti}. Once a node is corrupted, $\mathcal{A}$ gets access to its internal state and controls any transmission of information to and from the corrupted node.  The attacker is provided with the internal state of the corrupted node. Also, the incoming and outgoing communication of such a node gets routed through $\mathcal{A}$.

\subsection{Ideal world functionality}
\label{basicop}

\begin{figure}[!ht]
    \centering
    \includegraphics[scale=0.26]{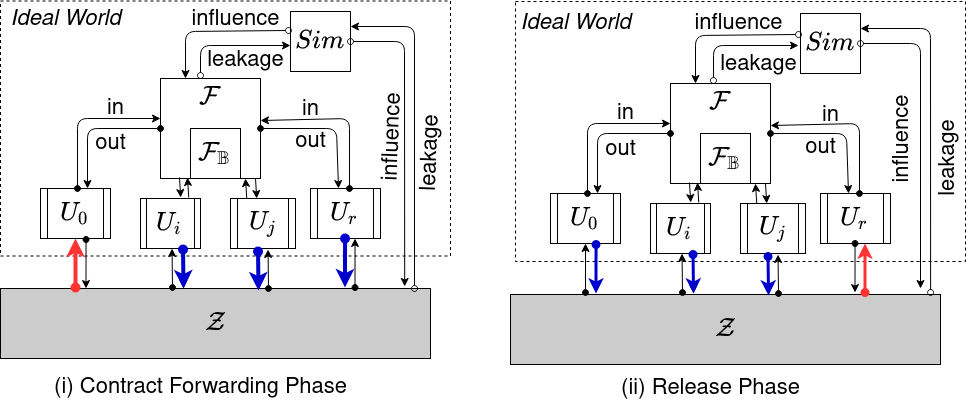}
    \caption{Execution of $\mathcal{F}$ with dummy parties $U_0$, $U_r$ representing payer and payee, $U_i$,$U_j$ representing intermediaries routing payment }
    \label{state}
\end{figure}

\textit{Notations.} We define an ideal functionality $\mathcal{F}$ for payment in PCN. Honest nodes in the network are modeled as interactive Turing machines. Such nodes are termed as dummy parties and they can communicate with each other via $\mathcal{F}$. $U_0$ denotes the initiator of the protocol and $U_r$ denotes the receiver. The latter internally access the global ideal functionality $\mathcal{F}_{\mathbb{B}}$, defined in Section \ref{sys}. Any payment channel existing in $\mathbb{B}$ is denoted by $(id_{i,j},v_{i,j},t_{i,j}',f_{i,j})$, where $id_{i,j}$ is the channel identifier of the payment channel existing between dummy parties $U_i$ and $U_j$, $v_{i,j}$ is the capacity of the channel, $t_{i,j}'$ is the expiration time of the channel and $f_{i,j}$ is the associated fee charged for the channel $id_{i,j}$. $\mathcal{F}$ maintains two lists internally - one for keeping track of the list of closed channels, denoted by $\mathcal{C}$, and one for keeping track of the list of off-chain payments, denoted by $\mathcal{L}$ \cite{malavolta}. Upon executing an off-chain payment in the channel $id_{i,j}$, $(id_{i,j},v_{i,j}',t_{i,j},h_{i,j}')$ is entered into $\mathcal{L}$ where $v_{i,j}'$ is the residual capacity of the channel and $t_{i,j}$ is the expiration time of the payment, $h_{i,j}'$ is the event identifier. When a channel $id_{i,j}$ is closed on-chain, it is entered into the list $\mathcal{C}$. Payment channels  forwarding the payment from $U_0$ to $U_r$ are put in set $\mathbb{PC}$, added serially upon breadth-first traversal of the network, starting from $U_0$. The flow in each channel $id_{i,j}$ present in $\mathbb{PC}$ is denoted by $val_{i,j}$.

\begin{proc}[!ht]
    \SetKwInOut{Input}{Input}
    \SetKwInOut{Output}{Output}

    \caption{Contract Forwarding Phase for node $U_j \in V$}
        \label{algo:payintmd}

   Upon input  $(forward,m)$ from $U_i$, parse $m$ to get $R_{i,j},val_{i,j},t_{i,j}$.\\
   Initialize \emph{proceed}=0\\
                $val_{j}=val_{j}+val_{i,j}$\\
        Form contract with $U_i$ using condition $R_{i,j}$, receive $Z_{i,j}$ from $U_i$.\\
			
         \If{$U_j \neq U_r$}
         {
         Decrypt $Z_{i,j}$ to get $M_{j}=\{(id_{j,k},val_{j,k},x_{j,k},R_{j,k},t_{j,k},Z_{j,k}): \forall k \in V, id_{j,k} \in E\}$\\
         
           Form set $D_j=\{(id_{j,k},x_{j,k},R_{j,k},t_{j,k}): \forall k \in V, id_{j,k} \in E\}$\\
          Call \textbf{TimeLockContractForward} with $(D_j,t_{i,j},R_{i,j})$ as the input\\
                    \If{(receives success)}
                    {
                    	Set \emph{proceed}=1\\
                    	\If{$T_{j}=0$}
                    	{
                    	  Set $T_j=T_{current}$\\
                    	}
                    }
                      \If{\emph{proceed}=1}
          {
          		
          
            \If{$val_{j}< \sum\limits_{ U_k \in V, id_{j,k} \in E} val_{j,k} + f(U_j)$} 
            {
            		Wait for timeperiod of $T_j+\delta$\\
				    \If{timeperiod has elapsed and $flag_{j}=0$}
				    {
				    		Set \emph{proceed}=0\\
				    } 
%
%
            		
            }
            \Else
            {
            	Set $flag_{j}=1$\\
                \For{$U_k \in V: id_{j,k} \in D_{j}$}
                {
                		 Send $(\textrm{forward},R_{j,k},val_{j,k},t_{j,k})$ to $U_k$, receive response from $U_k$\\
                		 {

					  		$C(U_j,U_k)=C(U_j,U_k)-val_{j,k}$\\
					  		                		    Set Contract$(id_{j,k})$=1, send $Z_{j,k}$ to $U_{k}$\\
					  		
                		 }
%
%
%
%
%
                }
            }
             
           }
        
           \If{\emph{proceed}=0}
           {
             \For{$U_m \in V: id_{m,j} \in E$}
                {
			
	\If{ $isContract(id_{m,j})=1$}
								{
										Send $(abort)$ to $U_m$.
					
			}
			}

           }

                    }

          
           \end{proc}

    \begin{proc}[!ht]
\setcounter{AlgoLine}{38}
                    \Else
                    {
                                        	\If{$T_{r}=0$}
                    	{
                    	  Set $T_r=T_{current}$\\
                    	}

                    \If{$val_r<val$}
                    {
                    
						Wait for timeperiod of $T_r+\delta$.\\
										    \If{timeperiod has elapsed and $flag_{r}=0$}
				    {
				    		\For{$U_b \in V: id_{b,r} \in E$}
{
\If{$isContract(id_{b,r})=1$}
{
   Send $(abort)$ to $U_b$
   }
}

				    }

                    }
                    \Else
                    {
                    Set $flag_r=1$.\\
                                            Call Release Phase defined in Procedure \ref{algo:releasereceiver}
                    }
                    
                    }
\end{proc}

\subsubsection{Operations}
We describe the operation \texttt{PAY} in the ideal world. $\mathcal{F}$ initializes a pair of local empty lists $(\mathcal{L},\mathcal{C})$. Each session is denoted by a session identifier $sid$. The phase is initiated by $U_0$, sending the payment value $val$ to be paid to $U_r$, the least timeout period of off-chain contract $t_{end}$. The other inputs are the set of payment channels $\mathbb{PC}$ along with the flow in each channel $val_{i,j}$ and the timeout period of off-chain contracts established on each channel, denoted as $t_{i,j}$, $\forall id_{i,j} \in \mathbb{PC}$. $\mathcal{F}$ initializes the variable $contract(sid,id_{i,j})=0$, $\forall id_{i,j} \in \mathbb{PC}$ to indicate that till now no off-chain contract got established in this session. \texttt{PAY} is divided into two phases: (i) \emph{Contract Forwarding Phase} and (ii) \emph{Release  Phase}, defined in \autoref{fpcn}. 

(i) The \emph{Contract Forwarding Phase} is triggered after $U_0$ sends the \emph{pay} instruction along with the set $\mathbb{PC}$ and the value of the payment. The inputs provided from the environment $\mathcal{Z}$ are marked in red, as shown in \autoref{state}(i). Each node $U_i \neq U_r, U_i \in \mathbb{PC}$ is visited in breadth-first fashion and the nodes are inserted in the queue $Q_{pay}$. Before sending the request to $U_j$, $\mathcal{F}$ checks whether an open channel $id_{i,j}$ exists in $\mathbb{B}$. Next, it checks whether the channel $id_{i,j}$ has enough capacity for forwarding the payment. The consistency of the timeout period for incoming and outgoing contracts is checked as well. If any of the conditions fail, $\mathcal{F}$ removes any entry for off-chain payments in $\mathcal{L}$ and aborts. If all the criteria hold, $\mathcal{F}$ forwards the partial payment to node $U_j$, output arrow marked in blue, shown in \autoref{state}(i). If all preceding contracts of $U_j$ got established, then it becomes a candidate for forwarding the payment. $U_j$ is thus inserted into $Q_{pay}$. If $U_j$ sends abort, then all the entries in $\mathcal{L}$ are removed and $\mathcal{F}$ aborts.

(ii) Once the payment reaches $U_r$, it triggers the \emph{Release Phase} by sending a \emph{response} to $\mathcal{F}$, input arrows marked in red, shown in \autoref{state}(ii). If $U_r$ sends abort, then the payment is considered to have failed. All the entries are removed from $\mathcal{L}$ and $\mathcal{F}$ aborts. If $U_r$ responds with success, then $\mathcal{F}$ sends a success message to predecessors of $U_r$, updates the entry in $\mathcal{L}$. The output of the intermediate parties sent to environment $\mathcal{Z}$ is marked as blue arrows in \autoref{state}(ii). If the predecessor sends an abort message, then such a node is marked as \emph{visited} and the entry is pushed in $Q_{failure}$. Else, that node is considered as the candidate for forwarding the success message to its predecessors and marked as \emph{visited}, if it has not been visited before. Nodes in $Q_{failure}$ are dealt with later after all the successful payments get settled. Each of these nodes sends an abort message to its predecessor. If a predecessor has not visited before, then it is pushed in $Q_{failure}$ and the process continues.
\begin{proc}[!ht]
    \SetKwInOut{Input}{Input}
    \SetKwInOut{Output}{Output}

    \caption{Release Phase for receiver $U_r$}
        \label{algo:releasereceiver}
            Set $stop=0$\\
            
\For{$U_b \in V: id_{b,r} \in E$}
{
            
              Decrypt $Z_{b,r}$ to get $\{y_{b,r},t_{end}\}$\\
        
        \If{$t_{b,r}=t_{end}$}
        {

				$y=y+y_{b,r}$\\

		}
		\Else
		{
            $stop$=1\\
              \textbf{break from the loop}\\

		}

}
\If{stop=0}
{
\For{$U_b \in V: id_{b,r} \in E$}
{
Call \textbf{TimeLockContractRelease} with $(y,x_{\acute{r}},id_{b,r})$ as input, gets $r_{b,r}$.\\
           
           \If{$R_{b,r} \neq r_{b,r}G$}   
           {
                         $stop$=1\\
              \textbf{break from the loop}\\

           }
           \Else
           {
           		Store $r_{b,r}$.
           }
			
}
}
\If{$stop=1$}
{
\For{$U_b \in V: id_{b,r} \in E$}
{
\If{$isContract(id_{b,r})=1$}
{
   Send $(abort)$ to $U_b$
   }
}
}
\Else
{
\For{$U_b \in V: id_{b,r} \in E$}
{
   Send $(accept,r_{b,r})$ to $U_b$
}

}
	           
\end{proc}

\subsubsection{Discussion}
 The operation \texttt{PAY} defined in ideal functionality $\mathcal{F}$ satisfies privacy properties of CryptoMaze in the following ways:

\begin{itemize}[leftmargin=*]

\item Correctness: In \emph{Contract Forwarding phase}, each intermediate node $U_i$ gets instructions for forwarding payment from $\mathcal{F}$ on behalf of node $U_j$, provided $\sum\limits_{U_k \in V, id_{k,j} \in \mathbb{PC}} val_{k,j}=\sum\limits_{U_m \in V, id_{j,m} \in \mathbb{PC}} val_{j,m}+f(U_m)$. $U_r$ triggers the release phase and responds with success, provided it has received the amount $val$. If all the parties have behaved honestly and $U_r$ responds with success in the \emph{release phase}, then $\mathcal{F}$ updates in $\mathcal{L}$ the channels present in $\mathbb{PC}$. Thus, $U_0$ can complete the payment by forwarding $val+\sum\limits_{U_i \in V\setminus \{U_0,U_r\}}f(U_i)$, where each node $U_i \in V \setminus \{U_0,U_r\}$ gains $f(U_i)$ and $U_r$ gets the amount $val$.
\item Consistency:  Release Phase defined in \autoref{fpcn} shows that $U_i$ is pushed into the queue $\mathcal{T}$ only if there is a successor $U_j$ that had resolved the off-chain contract forwarded by $U_i$. Once $U_i$ enters into $\mathcal{T}$, then it will be popped out of the queue for resolving its preceding contracts. If all the neighbors of $U_i$ have sent abort, then none of the preceding contracts forwarded to $U_i$ will get resolved.
\begin{proc}[!ht]
    \SetKwInOut{Input}{Input}
    \SetKwInOut{Output}{Output}

    \caption{Release Phase for node $U_j \in V \setminus \{U_r\}$}
        \label{algo:releaseintmd}

			   Upon receiving input $(U_i,accept,m)$, parse $m$ to get $r_{j,i}$\\

			    $C(U_i,U_j)=C(U_i,U_j)+val_{j,i}$\\
			\If{$release_j$=0}
			{
			   Set $release_j$=1\\
			   \If{$U_j$ had forwarded payment to more than one node}
			   {
			   		$r_{j,i}=r_{j,i}+x_{j,i}$\\
			   		$x_j=\sum\limits_{ U_i \in V: id_{j,i} \in E}x_{j,i}$
			   		
			   }
			                \For{$U_m \in V: id_{m,j} \in E$}
                {

			Call \textbf{TimeLockContractRelease} with $(r_{j,i},x_{j},id_{m,j})$ as input, gets $r_{m,j}$.\\
			Send $(\textrm{accept},r_{m,j})$ to $U_m$.
			}
			   
			}
\end{proc}

\begin{proc}[!ht]
    \SetKwInOut{Input}{Input}
    \SetKwInOut{Output}{Output}

    \caption{Abort for node $U_j \in V $}
        \label{algo:abort}

			   Upon receiving input $(U_i,abort)$\\
			   Set $flag=0$\\

			    $C(U_j,U_i)=C(U_j,U_i)+val_{j,i}$\\
			                \For{$id_{j,k} \in M_j$}
                {

                   \If{$isContract(id_{j,k})=1$}
                   {
                         $flag=1$\\
                   		\textbf{break from the loop}
                   }
			}
			\If{$flag=0$}
			{
			 \For{$U_m \in V: id_{m,j} \in E$}
                {
			
			Send $(abort)$ to $U_m$.
			}
			
			}

\end{proc}

\item Balance Security: Any intermediate node $U_i$ can claim payment from its preceding neighbors if at least one of the outgoing neighbors of $U_i$ accepted the payment. If $U_i$ receives abort from all the successors, it will abort as well. The total balance of $U_i$ either remains unchanged or it gains a processing fee $f(U_i)$.

\item Value Privacy: The ideal functionality $\mathcal{F}$ does not contact any user that does not 
belong to the set $\mathbb{PC}$, hence they learn nothing about the transacted value.

\item Unlinkability: For all the neighbors $U_j$ of node $U_i$, $\mathcal{F}$ samples a random identifier $h_{i,j}'$. Even if the neighbors collude, they cannot find any correlation amongst the payment identifiers.

\item Relationship Anonymity: Follows from unlinkability. If there exist at least one honest intermediate node $U_i$, then it receives a unique event identifier from $\mathcal{F}$ for each payment over any of its outgoing payment channels. Since all the event identifiers are independently generated, if at least one honest user $U_i$ lies in a payment path, any two simultaneous payments getting routed over the same set of payment channels for the same value $val$ is indistinguishable to the outgoing neighbors of $U_i$ receiving the request for forwarding the payments. This implies that any corrupted node cannot distinguish between the payments $(U_0,U_r,val)$ and $(U_0',U_r',val)$ with probability greater than $\frac{1}{2}$.

\item Atomicity: If $U_r$ triggers the release by responding with \emph{success}, it means that it has received all the partial payments. If $U_r$ fails to receive even one partial payment, then it will send \emph{abort} signaling a failed payment.

\end{itemize}

\subsection{Universal composability (UC) security}
The ideal functionality $\mathcal{F}$ can be \emph{attacked} by an ideal world adversary called a simulator or \emph{Sim}, a $\mathcal{PPT}$ algorithm. An additional special party called environment $\mathcal{Z}$ which observes both the real world and the ideal world, provides the inputs for all parties and receives their outputs. $\mathcal{Z}$ can use the information leaked by adversary $\mathcal{A}$ or actively influence the execution. Adversary $\mathcal{A}$ can corrupt any party before the protocol starts. However, the former doesn't get any information from communication occurring between honest parties. Let $REAL_{\Pi,\mathcal{A},\mathcal{Z}}$ be the ensemble of the outputs of the environment $\mathcal{Z}$ when interacting with the attacker $\mathcal{Z}$ and users running protocol $\Pi$,
\begin{definition}
\label{def1}
\textbf{UC Security}. Given that $\lambda$ is the security parameter, a protocol $\Pi$ \emph{UC}-realizes an ideal functionality $\mathcal{F}$ if for all computationally bounded adversary $\mathcal{A}$ attacking $\Pi$ there exist a probabilistic polynomial time ($\mathcal{PPT}$) simulator $Sim$ such that for all $\mathcal{PPT}$ environment $\mathcal{Z}$, $IDEAL_{\mathcal{F}, Sim,\mathcal{Z}}$, and $REAL_{\Pi,\mathcal{A},\mathcal{Z}}$ are computationally indistinguishable.

\end{definition}

\begin{figure}[!htb]
\begin{tcolorbox}[top=0pt,left=0.2pt,right=0.2pt,bottom=0pt,width=0.5\textwidth, colback=black!1,colframe=black!115!black,title={} ]

$isChannel(id_{i,j}):$
\begin{itemize}[leftmargin=*]
\item $\mathcal{F}$ sends $id_{i,j}$  to $\mathcal{F}_{\mathbb{B}}$. The latter checks for an entry in $\mathbb{B}$ of the form $(id_{i,j},v_{i,j},t_{i,j}',f_{i,j})$.
\item If the entry does not exist, then return 0.
\item If the entry exists, then check if there is an entry $id_{i,j}$ in $\mathcal{C}$. If it is true, then return 0, else return 1.
\end{itemize}

$isPred(sid,U_i,\mathbb{PC},V_{\mathbb{PC}}):$
\begin{itemize}[leftmargin=*]
\item For each  $U_k \in V_{\mathbb{PC}}: id_{k,i} \in \mathbb{PC}$:
\begin{itemize}[leftmargin=*]
\item If $contract(sid,id_{k,i})=0$, then return \emph{failure}.
\end{itemize}
\item Return \emph{success} 
\end{itemize}

\underline{PAY}

\paragraph*{(ii) \underline{Contract Forwarding Phase}}
 $U_0$ invokes $\mathcal{F}$ with message $(sid,pay,U_r,val,t_{end},\{(id_{i,j},val_{i,j},t_{i,j}):id_{i,j}\in \mathbb{PC}\},\mathbb{PC})$.

\begin{itemize}[leftmargin=*]

\item For each $id_{i,j} \in \mathbb{PC}$, set $contract(sid,id_{i,j})=0$. 

\item $\mathcal{F}$ forms a set $V_{\mathbb{PC}}=\{U_i\}$ such that $U_i \in V$ and has a channel in $\mathbb{PC}$.
\item Initialize an empty queue $Q_{pay}$. Push $U_0$ into queue $Q_{pay}$.
\item While $Q_{pay}$ is not empty:
\begin{itemize}[leftmargin=*]
\item Pop $U_i$ from $Q_{pay}$.

\item For each  $U_j \in V_{\mathbb{PC}}: id_{i,j} \in \mathbb{PC}$:
\begin{itemize}[leftmargin=*]
\item If $U_j$ sends $(sid,abort)$ to $\mathcal{F}$ then it removes all entries such entries from $\mathcal{L}$ added in this phase, cancel their contracts by resetting the variable to 0, and abort. 

\item $\mathcal{F}$ checks $isChannel(id_{i,j})=1$. If the check fails, then remove all entries $d_i$ from $\mathcal{L}$ added in this phase and abort.
\item Create $z_{i,j}=\{(id_{j,k},val_{j,k},t_{j,k}): \forall U_k \in V_{\mathbb{PC}}, id_{j,k} \in \mathbb{PC}\}$, if $U_j\neq U_r$. Else $z_{i,r}=\{val,t_{end}\}$.
\item $\mathcal{F}$ checks $ t_{i,j}\stackrel{?}{\geq} \max\limits_{U_k \in V_{\mathbb{PC}}, id_{j,k} \in z_{i,j}}\{t_{j,k}\}+\Delta$ and $val_{i,j}\leq\sum\limits_{U_k \in V_{\mathbb{PC}}, id_{j,k} \in z_{i,j}} val_{j,k}+f(U_j)$. If any of the checks fail, then remove all entries from $\mathcal{L}$ added in this phase, cancel their contracts by resetting the variable to 0, and abort.
\item $\mathcal{F}$ checks whether for $(id_{i,j},v_{i,j}',.,.)\in \mathcal{L}$, if $v_{i,j}'\geq val_{i,j}$. If that is the case, then add $d_{i,j}=(id_{i,j},v_{i,j}'-val_{i,j},t_{i,j},\bot)$ to $\mathcal{L}$, where $(id_{i,j},v_{i,j}',.,.) \in \mathcal{L}$ is the entry with the lowest $v_{i,j}'$. If the conditions are not met, $\mathcal{F}$ removes all entries from $\mathcal{L}$ added in this phase and abort. 
\item If the conditions are met, set $contract(sid,id_{i,j})=1$. Sample an identifier $h_{i,j}'$ and send request $(sid,\textrm{forward},U_i,id_{i,j},val_{i,j},t_{i,j},h_{i,j}',z_{i,j})$  to $U_j$. 
\item If $isPred(sid,U_j,\mathbb{PC},V_{\mathbb{PC}})$ returns $success$, push $U_j$ to $Q_{pay}$.
\end{itemize}
\end{itemize}
\end{itemize}

\end{tcolorbox}
\caption{Ideal World Functionality for payment in PCN}
\label{fpcn}
\end{figure}

\begin{figure}[!htb]

\ContinuedFloat 
\begin{tcolorbox}[top=0pt,left=0.2pt,right=0.2pt,bottom=0pt,width=0.5\textwidth, colback=black!1,colframe=black!115!black,title={} ]
%


\paragraph*{(ii) \underline{Release Phase}}
 $U_r$ invokes $\mathcal{F}$ with message $(sid,response)$.
\begin{itemize}[leftmargin=*]
\item For each $U_j \in V_{\mathbb{PC}}$:
\begin{itemize}[leftmargin=*]
\item Set $visited(U_j)=0$.
\end{itemize}
\item Initialize $flag_{abort}=0$ and initialize empty queues $\mathcal{T}$ and $Q_{failure}$.
\item If $response=\bot$, then set $flag_{abort}=1$.
\item If $flag_{abort}=0$, push $U_r$ in $T$.

\item While $\mathcal{T}$ is not empty:
\begin{itemize}[leftmargin=*]
\item Pop node $U_j$ from $\mathcal{T}$.
\item For each  $U_i \in V_{\mathbb{PC}}: id_{i,j} \in \mathbb{PC}$ and $contract(sid,id_{i,j})=1$:
\begin{itemize}[leftmargin=*]
\item Update $d_{i,j} \in \mathcal{L}$ to $(-,-,-,h_{i,j}')$, send $(sid,success,h_{i,j}')$ to $U_i$ and $U_j$, set $contract(sid,id_{i,j})=0$.
\item If $U_i$ sends $(sid,abort)$ then $visited(U_i)=1$, push $U_i$ in $Q_{failure}$.
\item Else if $visited(U_i)=0$ and $U_i\neq U_0$, set $visited(U_i)=1$ and push $U_i$ in $\mathcal{T}$.
\end{itemize}

\end{itemize}
\item If $flag_{abort}=1$, then :
\begin{itemize}[leftmargin=*]
\item Push $U_r$ to $\mathcal{T}$.

\item While $\mathcal{T}$ not null:
\begin{itemize}[leftmargin=*]
\item Pop node $U_j$ from $\mathcal{T}$. 
\item If $U_j\neq U_0$, go to the next step, else go back to previous step and continue. 
\item For each  $U_i \in V_{\mathbb{PC}}: id_{i,j} \in \mathbb{PC}$ and $contract(sid,id_{i,j})=1$:
\begin{itemize}[leftmargin=*]
\item set $contract(sid,id_{i,j})=0$. Remove $d_{i,j}$ from $\mathcal{L}$, send $(sid,\bot,h_{i,j}')$ to $U_i$ and $U_j$.
\end{itemize}
\item If $U_i \notin \mathcal{T}$, push $U_i$ in $\mathcal{T}$.
\end{itemize}

\end{itemize}
\item Else:
\begin{itemize}[leftmargin=*]
%
%
\item While $Q_{failure}$ is not empty:
\begin{itemize}[leftmargin=*]
\item Pop node $U_j$ from $Q_{success}$
\item For each  $U_i \in V_{\mathbb{PC}}: id_{i,j} \in \mathbb{PC}$ and $contract(sid,id_{i,j})=1$:
\begin{itemize}[leftmargin=*]
\item set $contract(sid,id_{i,j})=0$. Remove $d_{i,j}$ from $\mathcal{L}$, send $(sid,\bot,h_{i,j}')$ to $U_i$ and $U_j$.
\item If $visited(U_i)=0$, set $visited(U_i)=1$, push $U_i$ in $Q_{failure}$.
\end{itemize}

\end{itemize}

\end{itemize}
\end{itemize}
\end{tcolorbox}

\caption{Ideal World Functionality for payment in PCN (Continued)}
\end{figure}

\subsection{Security analysis}
\label{privan}
From Definition \autoref{def1}, a protocol $\Pi$ is said to be \emph{UC}-secure if $\mathcal{Z}$ cannot distinguish whether it is interacting with the ideal world or real-world even in presence of a computationally bounded adversary $\mathcal{A}$. Since our protocol execution in real-world relies on ideal functionalities $\mathcal{F}_{smt}$ and $\mathcal{F}_{\mathbb{B}}$, we define our protocol in the hybrid world \cite{fairswap} instead of real world. 

\begin{theorem}
\label{th2}
Given $\lambda$ is the security parameter, elliptic curve group of order $q$ is generated by the base point $G$, the protocol CryptoMaze \emph{UC}-realizes the ideal functionality $\mathcal{F}$ in the $(\mathcal{F}_{\mathbb{B}}, \mathcal{F}_{smt}$)-hybrid world. 
\end{theorem}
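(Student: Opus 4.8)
The plan is to exhibit a $\mathcal{PPT}$ simulator $Sim$ that, using only the leakage granted by $\mathcal{F}$ together with the interfaces of $\mathcal{F}_{\mathbb{B}}$ and $\mathcal{F}_{smt}$, reproduces for the environment $\mathcal{Z}$ a transcript that is computationally indistinguishable from a genuine $(\mathcal{F}_{\mathbb{B}},\mathcal{F}_{smt})$-hybrid execution of CryptoMaze. Since only static corruption is allowed, I would first fix the set of parties corrupted by $\mathcal{A}$ and then describe $Sim$'s behaviour in the three canonical cases --- corrupt payer $U_0$, corrupt payee $U_r$, and corrupt intermediaries --- treating the messages that cross a corruption boundary as the only data $Sim$ must produce explicitly. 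For every honest-to-honest edge the sole thing $\mathcal{Z}$ observes is the length leaked by $\mathcal{F}_{smt}$, so $Sim$ merely forwards the corresponding $(sid,\textrm{instruction},\cdot,\cdot,|m|)$ and never has to generate the real content.

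The substantive work is the Preprocessing-phase material that $Sim$ must hand to corrupt nodes. I would have $Sim$ sample the condition points $R_{i,j}\in\mathbb{G}$ for all honest channels uniformly at random, subject to the algebraic relations defining the off-chain conditions that a corrupt neighbour will verify, model $\mathcal{H}$ as a random oracle so that each $e_{i,j}=\mathcal{H}(x_j\|id_{i,j})$ can be programmed consistently with the sampled points, and store the associated discrete logarithms so the Release phase can later be answered. Indistinguishability is then argued by a sequence of hybrids: starting from the real hybrid-world execution, I would (i) replace each honestly computed $R_{i,j}$ by a fresh random group element, bounding the distinguishing advantage by a reduction to ECDLP; (ii) replace the real secret shares $y_i$ and split offsets $x_{j,k}$ by independent random values, which is perfect because the sum $y=\sum_i y_i$ and the fixed-$\hat{x}$ construction leave the marginal of each share uniform; and (iii) replace the plaintexts routed through $\mathcal{F}_{smt}$ by their lengths, which is perfect by definition of $\mathcal{F}_{smt}$. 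The privacy clauses --- value privacy, unlinkability, and relationship anonymity --- follow because in the final hybrid the points seen on two sibling channels $id_{A,B},id_{A,C}$ are independent uniform elements, matching exactly the freshly sampled event identifiers $h_{i,j}'$ inside $\mathcal{F}$.

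The correctness, balance-security, consistency, and atomicity clauses are inherited from the structure of $\mathcal{F}$: I would show that every guard $Sim$ evaluates in the simulated \textbf{TimeLockContractForward} and \textbf{TimeLockContractRelease} modules mirrors a check already enforced in $\mathcal{F}$ --- channel existence via $isChannel$, timeout monotonicity $t_{i,j}\ge\max_k t_{j,k}+\Delta$, and the fee/flow balance --- so that any abort in the hybrid world is matched by an abort in the ideal world and conversely. The main obstacle I expect is the Release phase under the \emph{1-out-of-$m$} split: $Sim$ must produce, on a corrupt incoming edge, a valid discrete logarithm $r_{i,j}$ of $R_{i,j}$ using the decommitment released by \emph{whichever} honest successor responds first, and remain consistent even when $\mathcal{A}$ selectively aborts some siblings. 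Handling this requires that $Sim$ retain, from the Preprocessing simulation, the offsets $x_{j,k}$ tying each $R_{j,k}$ to the common value $Z=xG$, so that $r_{i,j}=e_{i,j}x_j+r_{j,k}$ can be assembled from any single branch; proving that this assembled value always satisfies $R_{i,j}=r_{i,j}G$, and that it leaks nothing beyond the success identifier already revealed by $\mathcal{F}$, is the crux of the indistinguishability argument.
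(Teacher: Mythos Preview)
Your proposal and the paper's proof share the same skeleton: build $Sim$ case-by-case for corrupt $U_0$, corrupt intermediary, and corrupt $U_r$; have $Sim$ sample the condition points with known discrete logarithms so that the Release phase can be answered; and reduce any distinguishing advantage to the hardness of computing a discrete log in $\mathbb{G}$. In that sense the approaches coincide.

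Where you diverge is in presentation and in one technical choice. The paper does \emph{not} run a hybrid argument or program $\mathcal{H}$ as a random oracle. Instead it isolates, for each corruption case, a single ``bad event'' in which $Sim$ would be forced to abort: $E_1$ (corrupt $U_0$) is the event that a released $r_{m,i}$ fails to propagate correctly across a split, shown to have probability~$0$ because $Sim$ already verified the algebraic relation $R_{k,m}=R_{m,i}+e_{k,m}x_mG+x_{m,i}G$; $E_2$ and $E_3$ (corrupt intermediary, corrupt $U_r$) are the events that $\mathcal{A}$ outputs a valid opening $r'$ with $R=r'G$ without having obtained it downstream, each bounded by $1/q$ directly from the discrete-log assumption. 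Indistinguishability then follows because $Sim$ aborts only on these negligible events and otherwise produces values drawn from the same distribution as the honest protocol.

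Your route via hybrids and RO programming is sound but heavier than necessary here: since $Sim$ always knows the scalars $x_j$ it hands to corrupt parties, it can compute $e_{i,j}=\mathcal{H}(x_j\|id_{i,j})$ honestly rather than program it, and your hybrid~(i) is actually a perfect step (the real $R_{i,j}$ are already uniform in $\mathbb{G}$), so the ECDLP reduction belongs only in the Release-phase forgery analysis, which is exactly where the paper places it. Your explicit discussion of the \emph{1-out-of-$m$} consistency in Release is more careful than the paper's, which folds it into the $E_1$ claim; that is a genuine expository improvement on your side.
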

\begin{proof} We design \emph{Sim} for the ideal world execution for the following cases: either the sender is corrupt or the receiver is corrupt, or one of the intermediate node is corrupt. The only event which distinguishes hybrid world from ideal world is when the \emph{Sim} aborts in ideal world. 
\begin{itemize}[leftmargin=*]
\item \underline{$U_0$ is corrupted}: $\mathcal{A}$ acts like the sender $U_0$, and forms packet $(R_{i,j},val_{i,j},t_{i,j},Z_{i,j})$, for each $id_{i,j}\in \mathbb{PC},U_i \neq U_r$. The encrypted message $Z_{i,j}$ upon decryption gives $M_{j}=\{(id_{j,k},val_{j,k},x_{j,k},R_{j,k},t_{j,k},Z_{j,k}): \forall k \in V, id_{j,k} \in \mathbb{PC}\}$, when $U_j \neq U_r$ and $M_j=\{(y_{i,j},t_{end})\}$, when $U_j=U_r$. $\mathcal{A}$ forwards the packet to \emph{Sim}. 

For each node $U_i \in V, U_i \neq \{U_0,U_r\}$, \emph{Sim} does the following:
\begin{itemize}[leftmargin=*]
\item Form set $D_{i}=\{(id_{i,k},x_{i,k},R_{i,k},t_{i,k}):\forall k \in V, id_{i,k} \in \mathbb{PC}\}$.
\item For each $U_j \in V: id_{j,i} \in \mathbb{PC}$:
\begin{itemize}[leftmargin=*]
\item Get $(R_{j,i},t_{j,i})$, call \textbf{TimeLockContractForward} with input $(D_i,t_{j,i},R_{j,i})$ as input. If it returns failure, then abort.
\end{itemize}
\emph{Sim} checks $\sum\limits_{U_j \in V: id_{j,i} \in \mathbb{PC}} val_{j,i} = \sum\limits_{U_k \in V: id_{i,k} \in \mathbb{PC}} val_{i,k}+f(U_i)$. If the check fails, abort.
\end{itemize}

If the process didn't abort, \emph{Sim} sends $(sid,pay,U_r,val,t_{end},\{(id_{i,j},val_{i,j},t_{i,j}): id_{i,j} \in \mathbb{PC}\}, \mathbb{PC})$ to $\mathcal{F}$. \emph{Sim} has already checked the flow consistency for the intermediate honest nodes, as well as consistency of terms of incoming and outgoing contracts before it forwards the conditional payment.

In the \emph{release phase}, if $U_r$ aborts, then the process aborts as well. If $U_r$ has released the secret, then \emph{Sim} checks that any node $U_i$ claiming payment from $U_j$ has released the discrete logarithm for $R_{j,i}$. We consider that an honest intermediate node $U_m$ splits the transaction value across multiple payment channels, and a partial value gets routed via channel $id_{m,i}$. We identify a bad event $E_1$: if adversary $\mathcal{A}$ has released $r_{m,i}$ for $R_{m,i}: r_{m,i}G=R_{m,i}$ but $\exists U_k$ where $id_{k,m} \in \mathbb{PC}$, $r=r_{m,i}+e_{k,m} x_{m}+x_{m,i}$ and $R_{k,m}\neq rG$ then \emph{Sim} aborts the simulation.  
\begin{claim}
\label{cl1}
The probability of $E_1$ is 0. 
\end{claim}
\textit{Proof.} \emph{Sim} checks the relation $R_{k,m}\stackrel{?}{=}R_{m,i}+e_{k,m} x_{m}G+x_{m,i}G, \forall U_i \in V, id_{m,i} \in \mathbb{PC}$ at the start. If $R_{m,i}=r_{m,i}G$ but $R_{k,m}\neq r G$, then $r\neq r_{m,i}+e_{k,m} x_{m}+x_{m,i}$, which contradicts event $E_1$. Hence, the probability is 0.

\item \underline{An intermediate party $U_m$ is corrupted}: If $\mathcal{A}$ can release the discrete logarithm of the statement used in the incoming channel's contract of node $U_m$ before the secret is revealed, \emph{Sim} aborts. 

When \emph{Sim} gets $(sid,\textrm{forward},U_j,id_{j,m},val_{j,m},t_{j,m},h_{j,m}',\\z_{j,m})$ from $\mathcal{F}$ on behalf of all incoming nodes $U_j$ of node $U_m$, it samples $x_{m,k}$ for each $U_k \in z_{j,m}$, computes $x_m=\sum\limits_{U_k \in V, id_{m,k} \in \mathbb{PC}} x_{m,k}$. \emph{Sim} sends $(forward,R_{j,m},val_{j,m},t_{j,m})$, $Z_{j,m}$ to $\mathcal{A}$ on behalf of all $U_j$s. $\mathcal{A}$ sends $(R_{m,k},val_{m,k},t_{m,k})$ to \emph{Sim} for all such $U_k$s, on behalf of $U_m$. \emph{Sim} checks whether  $\forall U_j \in V$, $R_{j,m}\stackrel{?}{=}x_mG+R_{m,k}+e_{j,m} x_{m,k}G$ and $t_{j,m}\stackrel{?}{=} \Delta+ \max\limits_{U_k \in V, id_{m,k} \in \mathbb{PC}} \{t_{m,k}\}$ and $\sum\limits_{U_j \in V, id_{j,m} \in \mathbb{PC}} val_{j,m}\stackrel{?}{=}\sum\limits_{U_k \in V, id_{m,k} \in \mathbb{PC}} val_{m,k}+f(U_m), \forall U_k \in V,id_{m,k}\in \mathbb{PC}$. If any of the checks fail, then it sends abort to $\mathcal{F}$.

Consider $U_t$ as the incoming node forwarding payment to $U_m$ and $U_h$ as the outgoing neighbor of $U_m$. \emph{Sim} samples $r^*$ such that $R_{m,h}=r^*G$ and $R_{t,m}=x_{m,h}G+e_{t,m} x_{m}G+R_{m,h}$. We identify another bad event $E_2$: if $\mathcal{A}$ releases $r'$ such that $R_{t,m}=r'G$ without querying \emph{Sim} on the event identifier $h_{m,h}'$, \emph{Sim} aborts. 
\begin{claim}
\label{cl2}
The probability of $E_2$ is $\frac{1}{q}$, where $\mathbb{G}$ is an elliptic curve group with large order $q$ i.e. $|\mathbb{G}|=q$.
\end{claim}
\textit{Proof.} Follows from the discrete logarithm hardness assumption, given a random point $h \in \mathbb{G}$, it is possible to guess the value $log_{G}h$ with probability $\frac{1}{q}$.

\item \underline{$U_r$ is corrupted}: \emph{Sim} receives $(sid,\textrm{forward},U_j,id_{j,r},\\ val_{j,r},t_{j,r},h_{j,r}',z_{j,r})$ on behalf of all incoming nodes $U_j$ of node $U_r$ from $\mathcal{F}$. \emph{Sim} gets $X_{\acute{r}}$ from $\mathcal{A}$ and samples $y_j$, creates $R_{j,r}= X_{\acute{r}}+ e_{j,r} yG$, where $y=\sum\limits_{U_j \in V: id_{j,r} \in \mathbb{PC}} y_j$, for all the incoming neighbors $U_j$ of node $U_r$.  It sends $(forward,R_{j,r},val_{j,r},t_{j,r})$, $Z_{j,r}$ to $\mathcal{A}$ on behalf of all $U_j$s. We identify another bad event $E_3$: if there exists a node $U_k: id_{k,r}\in \mathbb{PC}$ such that $\mathcal{A}$ releases $x'$ such that $R_{k,r}=x'G$ without querying \emph{Sim} on the event identifier $h_{k,r}'$, \emph{Sim} aborts the simulation.
\begin{claim}
\label{cl3}
The probability of $E_3$ is $\frac{1}{|q|}$, where $\mathbb{G}$ is an elliptic curve group with large order $q$ i.e. $|\mathbb{G}|=q$.
\end{claim}
\textit{Proof.} $\mathcal{A}$ knows $\textrm{dlog}(X_{\acute{r}})$, but it doesn't know $y$. Hence, $\mathcal{A}$ can guess $\textrm{dlog}(R_{k,r})$ with probability $\frac{1}{q}$.
\end{itemize}

\end{proof}
\textit{Indistinguishability from the ideal world}. The simulator \emph{Sim} designed is efficient since it runs a polynomially-bounded algorithm. To argue that $\mathcal{Z}$'s view in simulation is indistinguishable from the execution protocol in the hybrid-world protocol, we consider the occurrence of a bad event in \texttt{PAY}:\\
(i) When $U_0$ is corrupted, the random values sampled by \emph{Sim} and the values are chosen by an honest $U_0$ follow the same distribution. Similarly, when $U_r$ is corrupted or an intermediate node is corrupted, the random values sampled by \emph{Sim} remain indistinguishable from the data used in honest execution. \\
(ii) Indistinguishability breaks when \emph{Sim} aborts in the ideal world. We infer from Claim \ref{cl1}, Claim \autoref{cl2}, and Claim \ref{cl3}, that bad events occur with negligible probability and hence \emph{Sim} aborts with negligible probability.  

Thus, we have proved that our protocol CryptoMaze \emph{UC}-realizes the ideal functionality $\mathcal{F}$ in the $(\mathcal{F}_{\mathbb{B}},\mathcal{F}_{smt})$-hybrid world. If the security and privacy goals stated in Section \ref{pg} are realized by $\mathcal{F}$, then as per \emph{UC Definition of Security} stated in Definition \ref{def1} these security notions are satisfied by our protocol as well.

\section{Experimental analysis}
\label{exp}

We choose to compare our protocol with \emph{Multi-Hop HTLC} \cite{malavolta}, \emph{Atomic Multi-path Payment} \cite{multipath} and \emph{Eckey et al.} \cite{eckey2020splitting}. \emph{Multi-Hop HTLC} is a single-path payment protocol, and we show how extending it to a multiple-path payment would work. In this protocol, a node forwarding payment in a given path gets a tuple $(y,h_1,h_2)$ along with the non-interactive zero-knowledge proof $\Pi$ for the statement $``\exists x': h_1=\mathcal{H}(x')  \ and \ h_2=\mathcal{H}(y \oplus x')''$. \emph{Atomic Multi-path Payment} or \emph{AMP} is the most efficient protocol in the existing state-of-the-art in terms of run time (considering best-case run time) as well as communication cost. If there are $n$ paths in \emph{AMP}, then the payer generates secret shares $x_1,x_2,\ldots,x_n$ for each path from the master secret $x: x=x_1 \oplus x_2 \oplus \ldots \oplus x_n$. We find the objective of the protocol proposed in \emph{Eckey et al.} similar to ours. However, the payment split is decided on the fly and sharing of the public key leads to linkability between partial payments. We explain briefly the state-of-the-art with which we have compared our protocol.

\subsubsection{Multi-Hop Hashed Timelock Contract} 

We discuss a protocol Multi-Hop HTLC with an example. The protocol  preserves privacy of payment and hides the identity of payer and payee \cite{malavolta}. 

\textit{Construction.} In the Fig. \ref{mhtlc}, Alice samples 4 random numbers $x_1,x_2,x_3$ and $x_4$. It constructs $y_4=\mathcal{H}(x_4)$ where $\mathcal{H}$ is any standard one-way hash function. Next, it constructs $y_3=\mathcal{H}(x_3 \oplus x_4)$ and a zero-knowledge proof $\pi_3$ for the statement \emph{``given $y_3$ and $y_4$, there exists an $x: y_4=\mathcal{H}(x)$ and $y_3=\mathcal{H}(x_3\oplus x)$''}. Similarly, it constructs $y_3=\mathcal{H}(x_2 \oplus x_3 \oplus x_4)$ and a zero-knowledge proof $\pi_2$ for the statement \emph{``given $y_2$ and $y_3$, there exists an $x: y_3=\mathcal{H}(x)$ and $y_2=\mathcal{H}(x_2\oplus x)$''}. It constructs $y_1=\mathcal{H}(x_1 \oplus x_2 \oplus x_3 \oplus x_4)$ and a zero-knowledge proof $\pi_1$ for the statement \emph{``given $y_1$ and $y_2$, there exists an $x: y_2=\mathcal{H}(x)$ and $y_1=\mathcal{H}(x_1\oplus x)$''}. It sends the value $(x_1,y_1,y_2,\pi_1)$ to Bob, $(x_2,y_2,y_3,\pi_2)$ to Charlie, $(x_3,y_3,y_4,\pi_3)$ to Eve and $(x_4,y_4)$ to Dave via a secure anonymous channel. 

\emph{Contract Creation Phase.} Bob, Charlie, and Eve check whether the zero-knowledge proof received is correct or not. Alice forms the contract with Bob using condition $y_1$. If proof $\pi_1$ is correct, Bob accepts the payment, else he will abort. Bob forwards the payment to Charlie using the condition $y_2$. Charlie forwards the payment to Eve using the condition $y_3$ and Eve does the same to Dave using the condition $y_4$.

\emph{Contract Release Phase.} Upon receiving the conditional payment, Dave checks if $y_4\stackrel{?}{=} \mathcal{H}(x_4)$. If this holds true, Dave sends $x_4$ to Eve and claims payment. Eve calculates $x_3 \oplus x_4$, sends it to Charlie and claims payment. Charlie computes $x_2 \oplus x_3 \oplus x_4$, claims payment from Bob upon releasing this key. Bob computes $x_1 \oplus x_2 \oplus x_3 \oplus x_4$ and claims payment from Alice.
\begin{figure}
    \centering
    \includegraphics[scale=0.35]{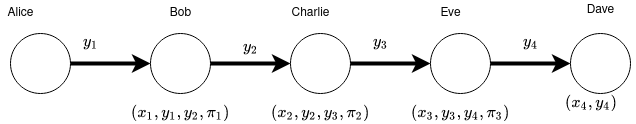}
    \caption{Multi-Hop HTLC construction for payment from Alice to Dave}
    \label{mhtlc}
    \end{figure}
None of the intermediate participants can correlate the payments as every channel uses a different condition. However, Multi-Hop HTLC requires exchanging non-trivial amount of data and computation of complex zero-knowledge proof during setup phase.

\subsubsection{Atomic Multi-path Payment}

Atomic MultiPath Payment \cite{multipath} splits payment across multiple path, guaranteeing atomicity. Once the receiver receives all the conditional payments from different routes, only then it can claim the payment. The payment hash used across different routes is different, preventing any correlation. The setup is non-interactive, where the payer need not coordinate with the payee. We explain the protocol with an example.

\textit{Construction.} Alice needs to send an amount $v$ to Bob. It figures $n$ paths $P_1,P_2,\ldots,P_n$, with each path transferring $v_1,v_2,\ldots,v_n: v=\sum\limits_{i=1}^n v_i$ as shown in Fig. \ref{amp}. Alice samples secret $s_1,s_2,\ldots,s_n$. The master secret $s=s_1\oplus s_2\oplus \ldots \oplus s_n$ can be generated. Using $s$, she generates the condition of payment for each path $P_i$ as follows: $H_i=\mathcal{H}(s||i), i\in [1,n]$. For each path $P_i$, the conditions of payment $H_i$ is forwarded using onion routing, where the tuple $(s_i,i)$ is sent as an encrypted onion blob or EOB which can only be decrypted by Bob. Upon receiving all the conditions from n paths, Bob computes $s$ and constructs the preimage $s||i$ for each path $P_i$ to claim payment. If any of the paths fails, then Bob will not be able to claim payment. 
\begin{figure}
    \centering
    \includegraphics[scale=0.3]{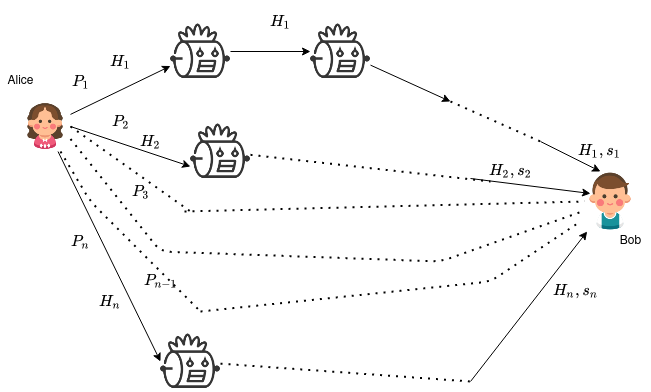}
    \caption{Atomic MultiPath Payment from Alice to Bob}
    \label{amp}
    \end{figure}

\subsubsection{Splitting Payments Locally While Routing Interdimensionally}

Eckey et al. \cite{eckey2020splitting} had proposed an atomic payment protocol that allows intermediaries to split payments dynamically by adapting to the local condition. Instead of the payer specifying the path, each intermediate party independently decides on the split of the payment value. Initially, the receiver sends the hash of a secret preimage $x_R$, denoted as $H_R=\mathcal{H}_{ah}(x_R)$, to the sender, the hash function used here is an additive homomorphic one-way function. The latter, upon checking its neighbor and their residual capacity, decides on the conditional payment by sampling different random values for each split. If the sender splits the payment amount into $k$ parts, then it samples $x_1, x_2, \ldots, x_k$. The hash of the receiver's preimage and the hash of the sender's preimage for each split is added, $H+\mathcal{H}(x_i), i \in [1,k]$ and forwarded to the neighbor along with the address of the receiver and encrypted value of each random value, denoted as $Enc_{HE}(x_i)$. The encryption used here is homomorphic in nature, using the public key of the receiver. If two encrypted values are added, then upon decryption, we get the summation of these two values. The neighbor upon receiving the packet decides upon the next neighbor which can forward the packet to the receiver. It performs the same step as done by the sender and combines its encrypted random value with that received from the sender. In the end, when the receiver receives the conditional payments for all the splits, it decrypts the encrypted value and adds the preimage it had sampled initially. It now claims the payment by releasing this preimage. Sender can generate a valid receipt of the payment, provided the sender receives the secret preimage sampled by the receiver. For routing, it uses a new algorithm, \emph{Interdimensional SpeedyMurmur}.

\subsection{Evaluation methodology}
In this section, we define the experimental setup. The code is available in \cite{Code}. System configuration used is \texttt{Intel Core i5-8250U CPU}, Operating System: \textit{Kubuntu-20.04.1}, and Memory: 7.7 GiB of RAM. The programming language used is C, compiler - gcc version 5.4.0 20160609. For implementing the cryptographic primitives in  CryptoMaze, Atomic Multi-path Payment or \emph{AMP} and \emph{Multi-Hop HTLC}, we use the library \textit{OpenSSL}, version-1.0.2 \cite{openssl}. For constructing the zero-knowledge proof for \emph{Multi-Hop HTLC}, we have used C-based implementation of ZKBoo\cite{Zkboo} and libgcrypt version-1.8.4 \footnote{\url{https://gnupg.org/software/libgcrypt/index.html}}. The number of rounds for ZKBoo is set to 136. This guarantees a soundness error of $2^{-80}$ for the proof and witness length is set to 32 bytes. For elliptic curve operations in CryptoMaze and \emph{Eckey et al.}, we have considered the elliptic curve secp224r1. For homomorphic encryption using \emph{Paillier Cryptosystem} in \emph{Eckey et al.}, \emph{libhcs} is used \cite{libhcs}. It is a C library implementing several partially homomorphic encryption schemes \cite{damgaard2010generalization}. 


\subsubsection{Metric used}
The following metrics are used to compare the performance of \textit{CryptoMaze} with other state-of-the-art protocols.
\begin{itemize}[leftmargin=*]
\item TTP (\textit{Time taken for payment}): It is the time taken for searching of eligible paths for routing a payment, formation of off-chain payment contracts, and completion of payment upon successfully fulfilling the criteria set in the contract. It is measured in seconds or \emph{s}.

\item Communication Overhead: For the given payment protocol, the number of messages exchanged between the nodes while searching for a set of paths and execution of the payment protocol, measured in kilobytes or \emph{KB}.
\end{itemize}

\begin{figure*}[ht]

    \centering
    \begin{subfigure}[b]{0.32\textwidth}
    \centering
    {\includegraphics[height=1.3in]{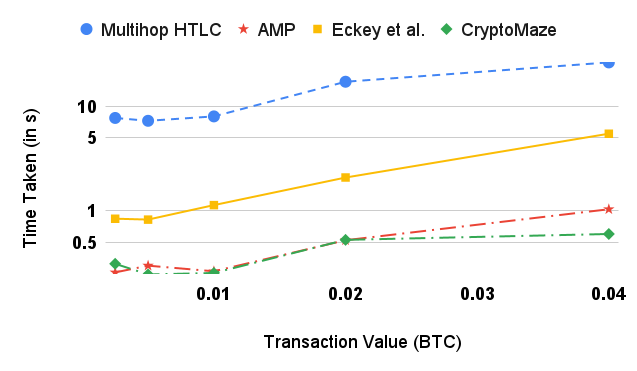}  }
    {\includegraphics[height=1.3in]{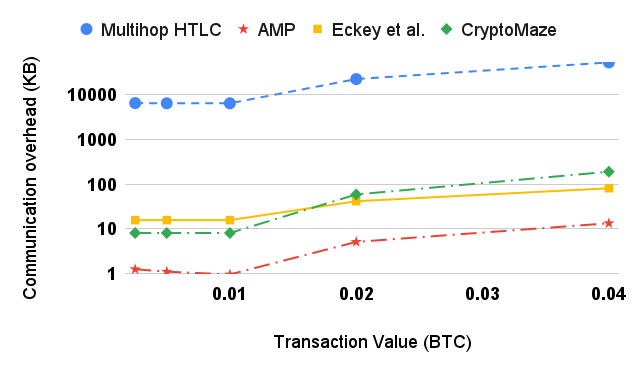} }   
    \label{march20}    
     \caption{LN snapshot March 2020}
\end{subfigure}
    \begin{subfigure}[b]{0.32\textwidth}
    \centering
    {\includegraphics[height=1.3in]{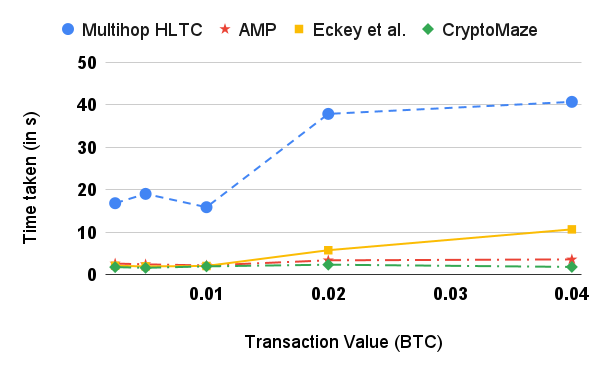} }
    {\includegraphics[height=1.3in]{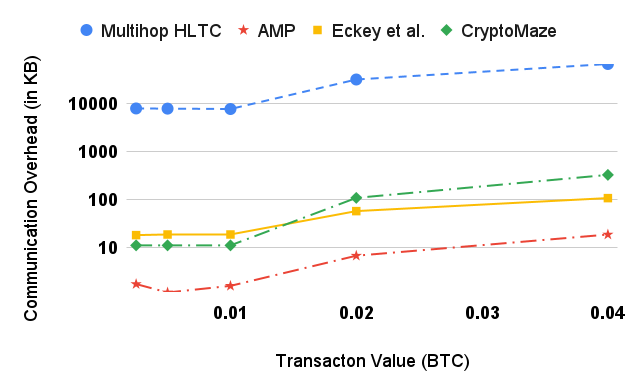}  }
    \label{april21}
     \caption{LN snapshot May 2021}
\end{subfigure}
    \begin{subfigure}[b]{0.32\textwidth}
    \centering
{    \includegraphics[height=1.3in]{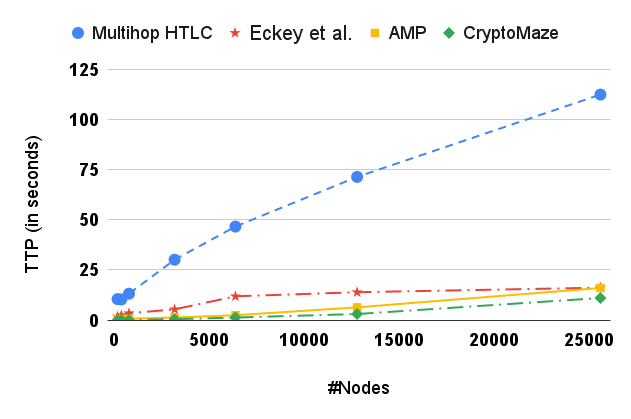}  }
{    \includegraphics[height=1.3in]{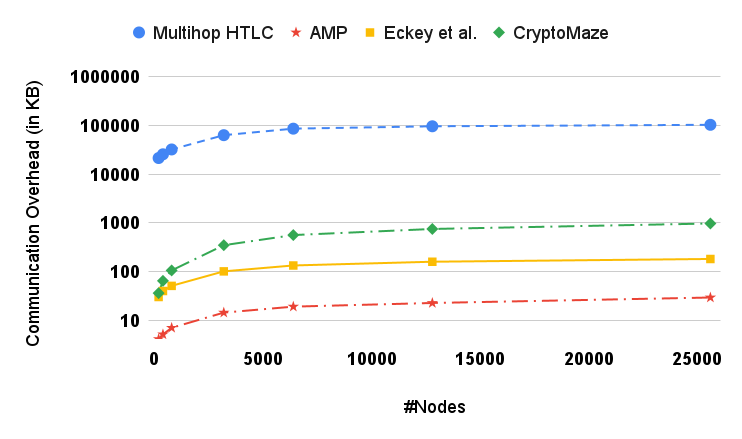}}

     \caption{Simulated Network}
\end{subfigure}
\caption{Experimental Analysis: Time taken for payment and Communication overhead}
\label{simulated}    
\end{figure*}

\subsection{Observations}
We use the distributed routing algorithm \emph{HushRelay} \cite{mazumdar2020hushrelay} for our protocol, \emph{Atomic Multi-path Payment} and \emph{Multi-Hop HTLC}, that returns the set of paths. Based on this set of paths as input, we run each instance of the payment protocol. Since the time taken for routing is taken into account while estimating TTP, it can be further optimized by using a more efficient distributed routing algorithm. 
\subsubsection{Evaluation on real instances}
We select two snapshots of Lightning Network taken on March 2020 \cite{Ayelet} and May 2021\footnote{\url{https://www.dropbox.com/s/fkq7kh5xyu3l33t/LN_25_05_2021.json?dl=0}}. The first instance has \emph{6329 nodes} and the second instance has \emph{11072 nodes}. The payment amount is varied between 0.0025 BTC - 0.04 BTC.

(a) \textit{Optimization in terms of off-chain contracts.} Before stating the observation in terms of execution time and communication cost, we analyze the saving in terms of off-chain contracts established in shared edges. Since state-of-the-art protocols instantiate multiple contracts on shared edges, we choose anyone as a representative and compare it with our protocol. We run 20000 payment instances for a given transaction value and state our results for the two Lightning Network instances.
\begin{itemize}[leftmargin=*]
\item \emph{LN instance, March 2020}: When the transaction value was increased from 0.0025 BTC to 0.04 BTC, the payment instances that had multiple routes sharing payment channels increased from $1\%$ to $38\%$. The number of payment channels shared in a single payment instance increased from $33\%$ to $55\%$. The number of times a particular payment channel got shared increased from $2$ to $5$. The total number of off-chain contracts per payment instance increased from $24\%$ to $68.75\%$ for state-of-the-art.   
\item \emph{LN instance, May 2021}: Payment instances sharing channels for a single payment increased from $0.04\%$ to $38.6\%$. Channels shared for a given instance increased from $33\%$ to $42.8\%$. A channel gets shared not more than $4$ times. The total number of off-chain contracts per payment instance increased up to $54.5\%$ for state-of-the-art.    

\end{itemize}

(b) \textit{Computation and Communication Cost.} We analyze the efficiency of CryptoMaze compared to state-of-the-art in terms of the metric stated, when executed on a single payment instance.
\begin{itemize}[leftmargin=*]
\item  TTP for CryptoMaze is equivalent to \emph{Atomic Multi-path Payment}, not exceeding 0.39s on average as shown in \autoref{simulated} (a) and it is around 1.85s in \autoref{simulated} (b) for the second snapshot. Our protocol is approximately 3 times faster than \emph{Eckey et al.} and 17.5 times faster than \emph{Multi-Hop HTLC} for both instances.
\item  The communication overhead in \autoref{simulated} (a) is 53.18KB and in \autoref{simulated} (b) is 93.203KB, on average. The overhead is 14.5 times greater than that of \emph{Atomic Multi-path Payment} and 2 times more than that of \emph{Eckey et al.} for both instances. The communication overhead of \emph{Multi-Hop HTLC} is 297 times more than CryptoMaze.

\end{itemize} 

\subsubsection{Evaluation on simulated instances} 
Payment channel networks follow a small-world, scale-free structure \cite{rohrer2019discharged}. For generating synthetic graphs of size ranging from 200 to 25600 based on Bar\'{a}basi-Albert model \cite{albert2002statistical}, \cite{barabasi2003scale}, library \textit{igraph} was used. Optimization in terms of off-chain contracts is not analyzed since these are synthetic graphs. The topology of the synthetic graph may not be able to mimic the execution of multiple payment instances in the Lightning Network. We make the following observations based on executing a single payment instance:
\begin{itemize}[leftmargin=*]
\item  TTP for CryptoMaze increases gradually with the increase in the size of the network. The execution time does not exceed 11s upon execution on an instance of size 25600. Run time of \emph{AMP} is 1.7 times of CryptoMaze, that of \emph{Eckey et al.} and \emph{Multi-Hop HTLC} is 3.5 times and 18 times that of our protocol on an average. The plot is given in \autoref{simulated}(c). 

\item  The communication overhead in \autoref{simulated} (c) increases with an increase in the size of the network, with the communication overhead not exceeding 1000KB or 1MB on an instance of size 25600. On average, the communication overhead of CryptoMaze is 5 times of \emph{Eckey et al.} and 33 times of \emph{Atomic Multi-path Payment}. However, the overhead is 105 times less compared to \emph{Multi-Hop HTLC}.

\end{itemize}

\subsection{Discussion}
(i) \emph{Optimization in terms of off-chain contracts}: When the transaction amount per payment was increased, the liquidity of channels decreased. Payments were split into smaller amounts and routed via multiple paths. Thus, we observed that the number of instances where the routes were not edge-disjoint increased. With the increase in transaction amount, the number of paths routing a payment increased due to the increase in the split. The number of off-chain contracts established per payment increased for the state-of-the-art protocols. When the size of the network increases, the higher the chance of finding routes with higher capacity, the more options of edge-disjoint routes. Hence, a decrease in the number of off-chain contracts is observed. 

CryptoMaze combines the conditions for each of the partial payments routed via shared edges and form a single off-chain contract, our protocol saves around $50\%-60\%$ compared to state-of-the-art in terms of setup cost. Also, it does not have to pay a node more than once for routing payment, thus saving on the processing fee.   

(ii) \emph{Efficiency in terms of computation and communication cost}: We discuss our observation in terms of the metric used.
\begin{itemize}[leftmargin=*]
\item Time taken to execute CryptoMaze is comparable to \emph{AMP}, sometimes even lower than the latter. The reason is the mapping set of routes into a set of edges before establishing the off-chain contracts. All the previous protocols considered each route individually, increasing the setup time. \emph{Eckey et al.} have a higher run time due to the use of homomorphic encryption. In \emph{Multi-Hop HTLC}, generating zero-knowledge proofs for the preimage of a given hash value is an expensive process in terms of computation cost.

Overall, the time taken to execute the payment protocol increases slightly with an increase in the transaction amount and an increase in the network size. The higher is the transaction amount, the higher the chance of the payment being split into multiple partial payments. When the network size increases, the time taken to process the network for searching paths for routing the payment increases as well.

\item It is observed that \emph{AMP} has the lowest communication overhead because each node forwards just a single commitment to its neighbor in the path routing payment. However, each path is susceptible to wormhole attack. \emph{Eckey et al.} have a higher communication overhead. Here, each node forwards the public key and an encrypted message to its neighbor. In CryptoMaze, each node forwards a set of conditions and a set of secret values to its neighbor. The communication overhead is slightly greater than \emph{Eckey et al.}. However, the surge in communication overhead is to some extent compensated in the shared channels, where a single off-chain contract instead of multiple off-chain contracts. \emph{Multi-Hop HTLC} has the highest communication cost. The zero-knowledge proof $\Pi$ forwarded to each node has a significant size, plus multiple off-chain contracts are formed on shared edges, increasing the communication overhead.

\end{itemize}

The result demonstrates that our proposed protocol is efficient and scalable in
terms of computation cost and resource utilization.

 \begin{figure*}[!htb]
\begin{center}
\fbox{
\begin{minipage}{0.8\textwidth} 
\underline{\textbf{KeyGen}}\\
Upon receiving $(sid,\textrm{keygen}, U_j)$ from $U_i$ and
$( sid, \textrm{keygen},U_i)$ from $U_j$:
\begin{itemize}[leftmargin=*]
\item Sample a secret key $sk \leftarrow \mathbb{Z}_q$
\item Compute a public key $pk=sk.\mathcal{G}$
\item Output the message $(\textrm{keygen},sid, pk)$ to $U_i$ and $U_j$
\item Store $(sid,\textrm{keygen},sk)$
\end{itemize}
\vspace{0.3cm}
\underline{\textbf{Lock}}\\
Upon receiving $(sid,lock,m,R_{i,j},pk)$ from both $U_i$ and $U_j$:
\begin{itemize}[leftmargin=*]
\item If $(sid,lock)$ is already stored, abort.
\item Check if $( sid, \textrm{keygen}, sk)$ for the given $pk : pk =sk\mathcal{G}$ has been stored.
\item Sample $k \leftarrow \mathbb{Z}_q$ and compute $(r_x, r_y) = R = kR_{i,j}$
\item Query the Random Oracle at point $(sid,m)$, which returns $\mathcal{H}(m)$.
\item Compute $s = k^{-1}(\mathcal{H}(m) + r_x · sk)$
\item Send a output $(lock, sid, (r_x, s))$ to $U_i$ and $U_j$
\item Store $( sid,lock)$
\end{itemize}
\vspace{0.3cm}
\underline{\textbf{Verify}}\\
Upon receiving $(sid,verify,m,r',z',pk)$ from both $U_i$ and $U_j$, where $R_{i,j}=r_{i,j}\mathcal{G}$:
\begin{itemize}[leftmargin=*]
\item If $( sid, lock)$ is not stored then abort.
\item Parse $z'$ and retrieve $(r_x,s)$
\item Query the Random Oracle at point $(sid,m)$, which returns $\mathcal{H}(m)$.
\item Compute $s' = \frac{s}{r_{i,j}}$ and $(s_x,s_y)=S'=\frac{\mathcal{H}(m)\mathcal{G}+r_x.pk}{s'}$\\
\item Check $s_x\stackrel{?}{=}r_x$, if true return $(sid,verified)$ to $U_j$

\end{itemize}

\end{minipage}
}
%
\end{center}
\caption{Interface of ideal world functionality $\mathcal{F}_{ECDSA-Lock}$}
 \label{define}
\end{figure*}

\section{Use of scriptless lock in Cryptomaze}
\label{lock}
We leverage the use of scriptless scripts, where a signature scheme can be used simultaneously for authorization and locking. The crux of a scriptless locking mechanism is that the lock can consist only of a message $m$ and a public key $pk$ of a given signature scheme, and can be released only with a valid signature $\sigma$ of $m$ under $pk$. We next define how scriptless ECDSA signature can be used as a locking mechanism, the construction is similar to the one defined in \cite{malavoltamulti}. %
The main idea used here is that the locking algorithm is initiated by two users $U_{i}$
and $U_j$ who agree on a message $m$, for our purpose we consider $m=id_{i,j}$, and on the value $R_{i,j} = r_{i,j} G$ of the unknown discrete logarithm. The two parties then generate a random number $k$ and agree on a randomness $R = k R_{i,j}$. The shared \emph{ECDSA signature}
is computed by \emph{``ignoring"} the $R_{i,j}$, since the parties
are unaware of its discrete logarithm. The signature computed is $(r_x,s)$ where it can be written as $(r_x,s'r_{i,j})$. The signature $(r_x,s')$ is a valid \emph{ECDSA signature} on $m$. Once $r_{i,j}$ is released by node $U_j$, it is used for completing the signature.

We define this as an ideal functionality $\mathcal{F}_{ECDSA-Lock}$ in Fig. \ref{define}, which has access to a Random Oracle. The interfaces are \textbf{KeyGen, Lock }and \textbf{Verify}. KeyGen generates a common public key for a payment channel $id_{i,j}$ between parties $U_i$ and $U_j$. The Lock Phase and Verify Phase have been discussed previously. CryptoMaze accesses this ideal functionality $\mathcal{F}_{ECDSA-Lock}$ for forming the lock and releasing it as well.

\section{Conclusion}
\label{7}
In this paper, we propose a novel privacy-preserving, atomic multi-path payment protocol CryptoMaze. Multiple paths routing partial payments are mapped into a set of edges. Off-chain contracts are instantiated on these edges in a breadth-first fashion, starting from the sender. The use of this technique avoids the formation of multiple off-chain contracts on channels shared across multiple paths, routing partial payments. Partial payments remain unlinkable that prevents colluding parties from censoring split payments. We analyze the performance of the protocol on some instances of Lightning Network and simulated networks. From the results, we infer that our protocol has less execution time and feasible communication overhead compared to existing payment protocols.

As part of our future work, we intend to improve the protocol by incorporating a dynamic split of payments, similar to the work in \cite{eckey2020splitting}. This will reduce the computation overhead of the sender by eliminating the preprocessing step of constructing conditions for each off-chain contract. However, the main challenge is to realize such a protocol without violating unlinkability. 

\section*{Acknowledgment}
We thank Dr. Sandip Chakraborty, Associate Professor in the Department of Computer Science and Engineering at the Indian Institute of Technology (IIT) Kharagpur, for his initial valuable comments on this work. We also thank the anonymous reviewers and associate editor of IEEE TDSC for their comments.

\bibliographystyle{ieeetr}

\bibliography{PCN}

\end{document}